\newcommand{\basExpIOB}{ 3.41^k}
\newcommand{\basPolyIOB}{3.86^k}
\newcommand{\expIOB}{$\OO^*(\basExpIOB)$}
\newcommand{\polyIOB}{$\OO^*(\basPolyIOB)$}
\newcommand{\expColor}{$\OO^*(4.32^k)$}
\newcommand{\polyColor}{$\OO^*((2e)^{k+o(k)})$}
\newcommand*\circled[1]{\protect\tikz[baseline=(char.base)]{
            \protect\node[shape=circle,draw,inner sep=1.5pt] (char) {\normalfont\small #1};}}
\DeclarePairedDelimiter\ceil{\lceil}{\rceil}
\DeclarePairedDelimiter\floor{\lfloor}{\rfloor}
\def\any{\mathord{\color{black!33}\bullet}}%
\def\Iver#1{ \llbracket #1 \rrbracket }
\newlength{\dhatheight}
\newcommand{\doublehat}[1]{%
    \settoheight{\dhatheight}{\ensuremath{\hat{#1}}}%
    \addtolength{\dhatheight}{-0.35ex}%
    \hat{\vphantom{\rule{1pt}{\dhatheight}}%
    \smash{\hat{#1}}}}
\renewcommand{\le}{\leqslant}
\renewcommand{\leq}{\leqslant}
\renewcommand{\ge}{\geqslant}
\renewcommand{\geq}{\geqslant}
\renewcommand{\epsilon}{\varepsilon}
\def\tradeoff{\tau}
\newenvironment{tightcenter}
 {\parskip=0pt\par\nopagebreak\centering}
 {\par\noindent\ignorespacesafterend}
\newlength{\RoundedBoxWidth}
\newsavebox{\GrayRoundedBox}
\newenvironment{GrayBox}[1]%
   {\setlength{\RoundedBoxWidth}{\textwidth-4.5ex}
    \def\boxheading{#1}
    \begin{lrbox}{\GrayRoundedBox}
       \begin{minipage}{\RoundedBoxWidth}%
   }{%
       \end{minipage}
    \end{lrbox}%
    \begin{tightcenter}%
    \begin{tikzpicture}%
       \node(Text)[draw=black!20,fill=white,rounded corners,%
             inner sep=2ex,text width=\RoundedBoxWidth]%
             {\usebox{\GrayRoundedBox}};
        \coordinate(x) at (current bounding box.north west);
        \node [draw=white,rectangle,inner sep=3pt,anchor=north west,fill=white] 
        at ($(x)+(6pt,.75em)$) {\boxheading};
    \end{tikzpicture}
    \end{tightcenter}\vspace{0pt}%
    \ignorespacesafterend
}    
\newenvironment{problem}[2][]{\noindent\ignorespaces%
                                \FrameSep=6pt%
                                \parindent=0pt%
                \vspace*{-.5em}
                \ifthenelse{\isempty{#1}}{%
                  \begin{GrayBox}{\textsc{#2}}%                
                }{%
                  \begin{GrayBox}{\textsc{#2} parametrised by~{#1}}%  
                }
                \newcommand\Prob{Problem:}%
                \newcommand\Input{Input:}%                        
                \begin{tabular*}{\textwidth}{@{\hspace{.1em}} >{\itshape} p{1.6cm} p{0.8\textwidth} @{}}%        
            }{
                \end{tabular*}%
                \end{GrayBox}%
                \vspace*{-.5em}
                \ignorespacesafterend
            }
\newtheorem{theorem}{Theorem}
\newtheorem{lemma}[theorem]{Lemma}
\newtheorem{corollary}[theorem]{Corollary}
\newtheorem{proposition}[theorem]{Proposition}
\newtheorem{definition}[theorem]{Definition}
\newtheorem*{claim}{Claim}
\newcommand{\qedhere}{\ifmmode\qed\else\hfill\proofSymbol\fi}
\newcommand{\coef}{{\rm coef}}
\newcommand{\OO}{{\cal O}}
\DeclareMathOperator{\pf}{pf}
\def\alphaopt{\alpha^\star}
\def\Bjorklund{Bj{\"o}rklund\xspace}
\def\Wahlstrom{Wahlstr{\"o}m\xspace}
\def\plog{\log^{\kern-.1pt{\scriptscriptstyle O(1)}}\kern-2pt}
\newcommand{\GG}[1]{#1}
\newcommand{\FR}[1]{#1}
\newcommand{\MW}[1]{#1}
\newcommand{\MZ}[1]{#1}
\begin{document}

\title{\GG{Designing Deterministic Polynomial-Space Algorithms by Color-Coding Multivariate Polynomials}\thanks{Gutin was partially supported by Royal Society Wolfson Research Merit Award, Reidl and Wahlstr{\"o}m by EPSRC grant EP/P007228/1, and Zehavi by ERC Grant Agreement no.~306992.}}

\author[1]{Gregory Gutin}
\author[1]{Felix Reidl}
\author[1]{Magnus Wahlstr{\"o}m}
\author[2]{Meirav Zehavi}
\affil[1]{Royal Holloway, University of London, TW20 0EX, UK}
\affil[2]{University of Bergen, Norway}
\date{}
\maketitle

\begin{abstract}
In recent years, several powerful techniques have been developed to design {\em randomized} polynomial-space parameterized algorithms. 
In this paper, we introduce an enhancement of color coding to design deterministic polynomial-space parameterized algorithms. Our approach aims at reducing the number of random choices by exploiting the special structure of a solution. Using our approach,  we derive polynomial-space \polyIOB-time (exponential-space \expIOB-time) deterministic
algorithm for {\sc $k$-Internal Out-Branching}, improving upon the previously fastest \emph{expo\-nential-space} $\OO^*(5.14^k)$-time algorithm for this problem. (The notation $\OO^*$ hides factors polynomial in the input size.)
We also design polynomial-space \polyColor-time (exponential-space \expColor-time)
  deterministic    algorithm for {\sc $k$-Colorful Out-Branching} on arc-colored digraphs and {\sc $k$-Colorful Perfect Matching} on planar edge-colored graphs.
In {\sc $k$-Colorful Out-Branching}, given an arc-colored digraph $D$, decide whether $D$ has an out-branching with arcs of at least $k$ colors. 
In {\sc $k$-Colorful Perfect Matching}, given an undirected graph $G$, decide whether $G$ has a perfect matching with edges of at least $k$ colors.
To obtain our polynomial space algorithms, we show that $(n,k,\alpha
k)$-splitters ($\alpha\ge 1$) and in particular $(n,k)$-perfect hash
families can be enumerated one by one with polynomial delay using polynomial space.
\end{abstract}

\section{Introduction}\label{sec:intro}

\noindent
\MZ{In this paper, we modify color coding to treat multivariate polynomials, and thus design an improved deterministic polynomial-space algorithm for {\sc $k$-Internal Out-Branching} ($k$-IOB). Before we elaborate on this problem and our contribution, let us first review related previous works that motivate our study.}
In recent years, several powerful algebraic techniques have been developed to
design {\em randomized} polynomial-space parameterized algorithms. The first
approach was introduced by Koutis \cite{AlgebraicPacking},
strengthened by Williams \cite{kPath2k}, and is nowadays
known as the multilinearity detection technique 
\cite{GroupAlgebrasFPT}. Roughly speaking, an application of this
technique consists of reducing the problem at hand to one where the objective
is to decide whether a given polynomial has a multilinear monomial, and then
employing an algorithm for the latter problem as a black box. \looseness-1

% First breakthrough
One of the huge breakthroughs brought about by this line of research
was \Bjorklund's~\cite{HamiltonDeterminant} proof that {\sc
Hamiltonian Path} is solvable in time $\OO^*(1.66^n)$ by a randomized
algorithm, improving upon the 50 year old~$\OO^*(2^n)$-time\footnote{%
  We use the common notation $\OO^*$ to hide factors polynomial in the input size.
} algorithm~\cite{TSPDP}. The existence of a
\emph{deterministic} $\OO^*((2{-}\epsilon)^n)$-time algorithm for {\sc Hamiltonian
Path}\MZ{, for a fixed $\epsilon>0$,} is still a major open problem. \FR{Further,} \GG{\Bjorklund's result is on undirected graphs\MZ{,} and the existence of an $\OO^*((2{-}\epsilon)^n)$-time algorithm for {\sc Hamiltonian
Path} on digraphs\MZ{, for a fixed $\epsilon>0$,} is another interesting open problem. }

% Narrow sieves
Shortly afterwards, \Bjorklund \etal~\cite{NarrowSieves} have
transformed the ideas in~\cite{HamiltonDeterminant} into a
powerful technique to design randomized polynomial-space algorithms, referred
to as \emph{narrow sieves}. This technique is also based on the analysis of
polynomials, but it is applied quite differently. Here one
associates a monomial with each ``potential solution'' in such a way
that actual solutions correspond to unique monomials while incorrect
solutions appear in pairs. Thus, the polynomial summing these monomials, when
evaluated over a field of characteristic 2, is not identically 0 if and only
if the input instance of the problem at hand is a yes-instance. In this
context, the relevance of the Matrix Tree Theorem was already noted by Gabizon
\etal~\cite{RelaxedDisjointness}.

% Post narrow sieves
The narrow sieves technique, proven to be of wide applicability on its own,
later branched into several new methods. The one most relevant to our
study was developed by \Bjorklund
\etal~\cite{SCycles} and was translated into the language of
determinants by \Wahlstrom \cite{Wahl2013}. Here, the studied problem was
{\sc $S$-Cycle} (or {\sc $S$-Path}), where the goal is to determine whether an
input graph contains a cycle that passes through all the vertices of an input
set $S$ of size $k$. \Wahlstrom~\cite{Wahl2013} considered
a determinant-based polynomial (computed over a field of characteristic~2),
and analyzed whether there exists a monomial where the variable-set
representing $S$ is present. Very recently, \Bjorklund
\etal~\cite{ICALP17ToAppear} utilized the Matrix Tree Theorem to improve an 
FPT algorithm for \MZ{$k$-IOB}, where we are
asked to decide whether a given digraph has a  $k$-internal out-branching.
Recall that an {\em out-tree} $T$ is an orientation of a tree with only one
vertex of in-degree zero (called the {\em root}). A vertex of $T$ is a {\em
leaf} if its out-degree in $T$ is zero; non-leaves are called {\em internal
vertices}. An {\em out-branching} of a digraph $D$ is a spanning subgraph of
$D$, which is an out-tree, and an out-branching is {\em $k$-internal } if it hasat
least $k$ internal vertices.

\Bjorklund
\etal~\cite{ICALP17ToAppear} cleverly transformed $k$-IOB into a new problem, where
the goal is to decide whether a given polynomial (computed over a field of
characteristic $2$ to avoid subtractions) has a monomial with at least $k$
distinct variables. 
%\textcolor{red}{In a sense, this situation resembles the one handled by
%\Wahlstrom \cite{Wahl2013}, though here the identity of the witnessing
%variables is unknown (previously, these variables were simply the ones
%representing the vertices in $S$).} % I suggest to delete this sentence

% What we do
In this paper, we present an easy-to-use\footnote{In particular, no dynamic
programming/recursive algorithms are required.} \GG{modification of color coding} for designing
deterministic polynomial-space parameterized algorithms, inspired by the
principles underlying the above mentioned techniques. (A slight modification
of \GG{our approach} can be used to design faster, exponential-space algorithms,
but we believe that the main value of the approach is for polynomial-space
algorithms.) \GG{We will show that our approach brings significant speed-ups to algorithms for $k$-IOB.}
Roughly speaking, our
\GG{approach} can be applied as follows.

\begin{itemize}
  \item  Identify a polynomial such that it
  has a monomial with at least $k$ distinct variables (called a {\em witnessing
  monomial}) if and only if the input instance of the problem at hand is a yes-instance. 
  It should be possible to efficiently evaluate the polynomial (black
  box-access is sufficient here).

  \item  Color the variables of the polynomial with~$k$ colors using
  a polynomial-delay perfect hash-family. To improve the running time of
  this step, we apply a problem-specific \emph{coloring guide}
  to reduce the number of `random' colors. Given a $k$-coloring, we obtain
  a smaller polynomial by identifying all variables of the same color.

  \item  Use inclusion-exclusion to extract the coefficient of a 
  colorful monomial from the reduced polynomial. By the usual color-coding
  arguments, if \GG{the coefficient is not equal to zero} then the original polynomial contained
  a witnessing monomial.
\end{itemize}

\noindent
\GG{While we were unable to obtain non-trivial coloring guides to the following problems, even limited 
application of our approach is useful for designing polynomial-space algorithms for these problems.
It would be interesting to obtain non-trivial coloring guides for the problems.}
%We demonstrate the usefulness of our approach in the context of the following
%problems.

\paragraph{Colorful Out-Branchings and Matchings} 

\noindent
Every subgraph-search problem can be extended quite naturally by imposing
additional constraints on the solution, for example by letting the input graph
have labels or weights. One class of such constraints states that a required
subgraph of an edge-colored graph has to be {\em $k$-colorful}, \ie to
contain edges of at least $k$ colors.

One prominent problem is {\sc Rainbow Matching}\footnote{In the problem, given
an edge-colored graph $G$ and an integer $k$, the aim is to decide whether $G$
has a $k$-colorful matching of size $k$.}(also known as {\sc Multiple Choice
Matching}), defined in the classical book by Garey and Johnson
\cite{GareyJ79}. Itai \etal~\cite{ItaiRT78}
showed, already in 1978, that {\sc Rainbow Matching} is \NP-complete on
bipartite graphs. Three decades later, Le and Pfender~\cite{LeP14} revisited
this problem and showed that it is \NP-hard on several restricted graph
classes, which include (among others) paths, complete graph and $P_4$-free
bipartite graphs in which every color is used at most twice. 
Further examples of subgraph
problems with color constraints can be 
found in a survey by Mikio and Xueliang~\cite{ColorConstraintsSurvey}.
In this paper, we focus on two color-constrained problems: given an
edge-colored graph and an integer $k$, we ask for either a $k$-colorful spanning tree/outbranching
or a $k$-colorful perfect matching.

We first rely on the Matrix Tree Theorem to present a deterministic
polynomial\hyp{}space \polyColor-time (exponential-space \expColor-time)
algorithm for {\sc $k$-Colorful Out-Branching}, defined as follows:

\begin{problem}[k]{Colorful Out-Branching}
  \Input & A arc-colored digraph~$D$ and an integer~$k$  \\
  \Prob  & Does~$D$ have a $k$-colorful out-branching? 
\end{problem}

\noindent
We argue in Section \ref{sec:colOutBranch} that {\sc $k$-Colorful Out-Branching} is \NP-hard on various restricted graph classes such as cubic graphs.

Next, we rely on a Pfaffian computation to present a deterministic
polynomial-space \polyColor-time (exponential-space
\expColor-time) algorithm for {\sc $k$-Colorful Perfect Matching} on
planar graphs, defined as follows:

\begin{problem}[k]{Planar Colorful Perfect Matching}
  \Input & A planar edge-colored graph~$G$ and an integer~$k$  \\
  \Prob  & Does~$G$ have a $k$-colorful perfect matching? 
\end{problem}

\noindent
We will show in Section \ref{sec:colPerfMatch} by a simple reduction that
{\sc Planar $k$-Colorful Perfect Matching} is \NP-hard even on planar graphs
of pathwidth 2. It is worthwhile to note that while {\sc Rainbow Matching} can
be viewed as a special case of the well-known {\sc $3$-Set $k$-Packing}
problem, and in particular, a solution for {\sc Rainbow Matching} is small
(containing only $2k$ vertices and $k$ colors), the case of {\sc $k$-Colorful
Perfect Matching} is different in the sense that a solution is necessarily
large since not every $k$-colorful matching can be extended to a perfect
matching.

\paragraph{$k$-Internal Out-Branching} By utilizing the method of bounded
search trees on top of the above machinery, in Section
\ref{sec:maxInOutBranch}, we present a deterministic polynomial-space
\polyIOB-time (exponential-space \expIOB-time) algorithm for the problem {\sc
$k$-Internal Out-Branching ($k$-IOB)}, defined as follows:

\begin{problem}[k]{Internal Out-Branching (IOB)}
  \Input & A digraph~$D$ and an integer~$k$  \\
  \Prob  & Does~$D$ have a $k$-internal out-branching?
\end{problem}

\noindent
The undirected version of {\sc $k$-IOB},
called {\sc $k$-Internal Spanning Tree ($k$-IST)}, is defined similarly.

\begin{problem}[k]{Internal Spanning Tree (IST)}
  \Input & A graph~$G$ and an integer~$k$  \\
  \Prob  & Does~$G$ have a $k$-internal spanning tree? 
\end{problem}

\noindent
Note that $k$-IOB is a generalization of $k$-IST since the latter can easily
be reduced to $k$-IOB on symmetric digraphs, \ie digraphs in which every arc is
on a directed cycle of length 2. Since $k$-IST is \NP-hard ({\sc Hamiltonian
Path} appears as a special case for $k=n-2$) it follows that so is $k$-IOB.
The latter is, however, polynomial time solvable on acyclic
digraphs~\cite{GutinRK09}. While acyclic digraphs are precisely digraphs of
directed treewidth $0$, it turns out that $k$-IOB is \NP-hard
already for digraphs of directed \FR{treewidth}~1~\cite{DankelmannGK09}. By constrast, the {\sc Directed
Hamilton Path} problems is polynomial-time solvable on digraphs of directed
\FR{treewidth} $t$ \cite{JohnsonRST01} if $t$ is a constant. The $k$-IOB problem a
priori seems more difficult than the well-known {\sc $k$-Path} problem
(decide whether a given digraph has a path on $k$ vertices) since a witness of
a yes-instance of $k$-path is a subgraph of size $k$, that is, a path on $k$
vertices. However, it is easy to see that a witness of a yes-instance of
$k$-IOB (which has an out-branching) can be a subgraph of size $2k-1$, that is, an
out-tree with $k$ internal vertices and $k-1$ leaves. This simple but crucial
observation lies at the heart of previous algorithms for $k$-IOB.

\begin{table}[bht]
\centering%
\def\determ{~\emph{det}}%
\def\random{~\phantom{\emph{det}}~\emph{rand}}%
\def\poly{\emph{poly}}%
\def\exp{\phantom{\emph{poly}}~\emph{exp}}%
\def\direct{\emph{directed}}
\def\undirect{\emph{undirected}}
\begin{tabular}{llll>{~}l}
  \toprule
  Reference                                   & \multicolumn{1}{c}{Det./Rand.} 
                                                         & \multicolumn{1}{c}{Space} 
                                                                 & Graph           
                                                                               & \hspace*{-4pt}Time~$\OO^*(\cdot)$  \\ \midrule
  Prieto \etal~\cite{PrietoS05}               & \determ  & \poly & \undirect   & $2^{O(k\log k)}$ \\  
  Gutin \etal~\cite{GutinRK09}                & \determ  & \exp  & \direct     & $2^{O(k\log k)}$ \\
  Cohen \etal~\cite{kIOB49k}                  & \determ  & \exp  & \direct     & $55.8^k$         \\    
                                              & \random  & \poly & \direct     & $49.4^k$         \\  
  Fomin \etal~\cite{kIOB16k}                  & \determ  & \exp  & \direct     & $16^{k+o(k)}$    \\
                                              & \random  & \poly & \direct     & $16^{k+o(k)}$    \\ 
  Fomin \etal~\cite{kISP8k}                   & \determ  & \poly & \undirect   & $8^k$            \\  
  Shachnai \etal~\cite{RepFamUnified}         & \determ  & \exp  & \direct     & $6.855^k$        \\
  Daligault~\cite{DaligaultThesis}                   & \random  & \poly & \direct     & $4^k$            \\
  Li \etal~\cite{Li0CW17}                     & \determ  & \poly & \undirect   & $4^{k}$            \\
  Zehavi~\cite{Zehavi15}                      & \determ  & \exp  & \direct     & $5.139^k$        \\
                                              & \random  & \exp  & \direct     & $3.617^k$        \\    
   \Bjorklund \etal~\cite{SpottingTrees}   & \random  & \poly & \undirect   & $3.455^k$        \\
   \Bjorklund \etal~\cite{ICALP17ToAppear} & \random  & \poly & \undirect   & $2^k$            \\  \midrule
   Our work                                   & \determ  & \poly & \direct     & $\basPolyIOB$    \\
                                              & \determ  & \exp  & \direct     & $\basExpIOB$     \\
   \bottomrule
\end{tabular}\medskip
\caption{Previously known FPT algorithms for {\sc $k$-IOB} and {\sc $k$-IST}.}
\label{tab:knownresults}
\end{table}

\noindent
Parameterized algorithms for $k$-IST and $k$-IOB were first studied by Prieto
and Sloper \cite{PrietoS05} and Gutin \etal~\cite{GutinRK09}, who proved that
both problems are {\em fixed-parameter tractable (FPT)}, \ie admit
deterministic algorithms of running time $\OO^*(f(k))$, where $f(k)$ is an
arbitrary recursive function depending on the {\em parameter} $k$ only. Moreover,
both paper\MZ{s} showed that $f(k)=2^{O(k\log k)}.$ Since then several papers
improved complexities of deterministic and randomized algorithms for both
problems; we list these algorithms in Table~\ref{tab:knownresults}. We also
remark that approximation algorithms, exact exponential-time algorithms and
kernelization algorithms for both $k$-IOB and $k$-IST were extensively
studied, but the survey of such results is beyond the scope of this paper.

Our polynomial-space algorithm for {\sc $k$-IOB} is faster, in terms of $f(k)$, 
than not only the
previously fastest {\em exponential-space} $O^*(5.139^k)$-time deterministic
algorithm for $k$-IOB by Zehavi~\cite{Zehavi15}, but also the previously fastest
polynomial-space $\OO^*(4^{k})$-time deterministic algorithm for $k$-IST of Li
\etal~\cite{Li0CW17}. In contrast, 
it is not known  how to design a deterministic polynomial-space
$\OO^*((4-\epsilon)^k)$-time algorithm for the {\sc $k$-Path} problem. Indeed,
a deterministic polynomial-space \MZ{$\OO^*(4^{k+o(k)})$}-time algorithm for {\sc
$k$-Path} has been known since 2006~\cite{DivAndCol},\footnote{\MZ{Although Chen et al.~\cite{DivAndCol} do not explicitly prove that the space complexity of their deterministic algorithm can be made polynomial as well, this knowledge has become folklore.}} yet so far no
improvements without the help of exponential
space~\cite{RepFamEfficient,Zehavi15} have been made. 
%As such, it is somewhat
%surprising that for $k$-IOB we are able to derive a deterministic polynomial
%space-algorithm that breaks the~$\OO^*(4^k)$-barrier.

The rest of the paper is organized as follows. In the next section, we
introduce necessary preliminary material. Section~\ref{sec:framework}
describes our approach in detail. The main contents of the next three
sections were discussed above. In Section~\ref{sec:polySpaceHash}, we prove a
derandomization theorem  forming part of our approach. 
In Section~\ref{sec:prop}, we show a proposition, which assists us in upper-bounding
running times of exponential-space algorithms. We conclude the paper in
Section~\ref{sec:conclusion} discussing some open problems.

\section{Preliminaries}\label{sec:prelims}

\noindent
We assume the reader is familiar with basic concepts and notations in Graph Theory and
Linear Algebra, and refer readers to textbooks in Graph Theory
\cite{jbj09,Diestel} and Linear Algebra
\cite{LayLinAlg} if additional details are required.

We will make use of the tighter version of Stirling's approximation
due to Robbins~\cite{TighterStirling}, which states that
\[
   \sqrt{2\pi k} \big( \frac{k}{e} \big)^k e^{1/(12k+1)} \leq k! \leq \sqrt{2\pi k} \big( \frac{k}{e} \big)^k e^{1/12k}.
\]
For~$\alpha \geq 1$ we will frequently use the function
\[
  \tradeoff(\alpha) = \Big( 1-\frac{1}{\alpha} \Big)^{\alpha-1} e,
\]
with the convention that~$\tradeoff(1)= e$. We will sometimes use 
the symbol~$\any$ in the following to denote a variable whose value
is arbitrary.

\paragraph{Operations on polynomials}
For a polynomial $P$ and a monomial $M$, we let $\coef_P(M)$ denote
the coefficient of $M$ in $P.$
For a polynomial~$P(x_1,\ldots,x_n)$ and subset of the variables~$C \subseteq \{x_1,\ldots,x_n\}$,
let~$P/C$ denote the polynomial obtained from~$P$ by replacing all variables in~$C$
by a new variable~$y_C$. We extend this notation to partitions
$\mathcal C := C_1 \uplus \ldots \uplus C_p$ and let~$P / \mathcal C$ denote
the polynomial~$(\ldots ((P / C_1) / C_2) \ldots ) / C_p$.

\paragraph{Structural Observation} It is not hard to decide whether a digraph
$D$ has an out-branching in linear time: $D$ contains an out-branching if and
only if $D$ has only one strongly connected component without incoming arcs
(see, \eg, \cite{jbj09}). Thus, in what follows, whenever we discuss problems
where a solution is in particular an out-branching, we will assume that the
digraph $D$ under consideration contains an out-branching.

\paragraph{Matrix Tree Theorem} In 1948 Tutte \cite{Tutte48} proved the
(Directed) Matrix Tree Theorem, which shows that the number of out-branchings
rooted at the same vertex $r$ in a digraph $D$ can be found efficiently by
calculating the determinant of a certain matrix derived from $D$. Here, we
require a generalization of this theorem, whose derivation from the original
theorem is folklore (a proof can be found, \eg, in \cite{ICALP17ToAppear}).

% For a directed multigraph $D$ we denote by $\mu_D(i,j)$ the number of arcs from $i$ to $j$ in $D$.
The {\em (symbolic) Kirchoff matrix} $K=K(D)$ of a directed multigraph $D$ on $n$ vertices 
is defined as follows, where we assume that the vertices are numbered 
from~$1$ to~$n$:
\begin{equation}
\label{Kdef}
K_{ij} = \left \{\begin{array}{ll}
\displaystyle{\sum_{\ell i\in A(D)}x_{\ell i}} & \mbox{if } i=j,\\
-x_{ij} &\mbox{ if } ij\in A(D),\\
0 & \mbox{ otherwise,}
\end{array}
\right.
\end{equation}
where $A(D)$ is the arc set of $D$.

In what follows, $[n]:=\{1,2,\ldots{},n\}$. For $i\in [n]$ we denote by
$K_{\bar{i}}(D)$ the matrix obtained from $K(D)$ by deleting the $i$th row and
the $i$th column. Moreover, let ${\cal B}_i$ denote the set of out-branchings
rooted at $i$. The following version of the (Directed) Matrix Tree Theorem
implies a natural one-to-one correspondence between the monomials of
$\det{}(K_{\bar{i}}(D))$ and the out-branchings in ${\cal B}_i$.

\begin{theorem}\label{thm:det-matrix-tree}
  For every directed multigraph $D$ with symbolic Kirchoff matrix $K(D)$ and
  $i\in V(D)$, $\det{}(K_{\bar{i}}(D)) = \displaystyle{\sum_{B\in{\cal
  B}_i}\prod_{ij\in A(B)}x_{ij}}$.
\end{theorem}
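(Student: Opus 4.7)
The plan is to prove the identity by Leibniz expansion of $\det(K_{\bar i}(D))$ followed by a combinatorial cancellation argument on functional subdigraphs. First I would apply the Leibniz formula over permutations $\pi$ of $[n]\setminus\{i\}$, then further distribute each diagonal sum $K_{jj} = \sum_{\ell j \in A(D)} x_{\ell j}$ over its summands. After this double expansion, each term is parametrised by a pair $(\pi, f)$ where $f\colon [n]\setminus\{i\}\to V(D)$ is a choice function with $f(j)j \in A(D)$ for every $j$ and $\pi(j) \in \{j, f(j)\}$, with $\pi(j) = f(j)$ forced whenever $\pi(j)\neq j$. Writing $H_f := \{f(j)j : j\neq i\}$, each such term contributes $\mathrm{sgn}(\pi)\cdot(-1)^{|\{j : \pi(j)\neq j\}|}\cdot\prod_{j\neq i} x_{f(j),j}$, so the monomial depends only on $H_f$.

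Next I would classify the possible $H_f$: since every $j\neq i$ has in-degree exactly $1$ in $H_f$ while $i$ has in-degree $0$, the subgraph $H_f$ is either an out-branching rooted at $i$ or else contains one or more directed cycles avoiding $i$. For fixed $f$ I would then characterise the compatible permutations $\pi$. On any vertex outside the cyclic part of $H_f$, the acyclicity of the parent-pointer structure forces $\pi(j) = j$; on a directed cycle $C$ of length $t$, the constraint $\pi(j)\in\{j,f(j)\}$ forces either $\pi\equiv\mathrm{id}$ on $V(C)$ or $\pi$ to act on $V(C)$ as the reverse $t$-cycle (any mixed choice fails to be a permutation). Hence the sum of contributions sharing a given $H_f$ factorises as a product over the directed cycles of $H_f$, where each cycle of length $t$ contributes $1 + (-1)^{t-1}\cdot(-1)^{t} = 0$. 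Consequently every non-out-branching $H_f$ is killed, while each out-branching $B\in\mathcal B_i$ admits only the compatible permutation $\pi = \mathrm{id}$ and contributes exactly $+\prod_{ij\in A(B)} x_{ij}$.

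The main obstacle is the careful sign bookkeeping on a cycle: verifying that the reverse $t$-cycle has permutation sign $(-1)^{t-1}$, that the $t$ off-diagonal picks contribute a factor $(-1)^{t}$ from the minus signs in \eqref{Kdef}, and that these combine to $-1$ so as to cancel the all-identity configuration on that cycle. A smaller but necessary check is that mixed diagonal/off-diagonal choices along a cycle really do not yield permutations, which amounts to observing that having $\pi(v_{k+1}) = f(v_{k+1}) = v_k$ forces $v_k$ to lose its only other potential preimage and hence propagates around the whole cycle. Once this local analysis is pinned down, summing over $H_f$ yields the claimed identity. Because the argument is purely formal in the variables $x_{ij}$, the identity holds as a polynomial equality over any coefficient ring, which is what the later applications (including the characteristic-$2$ setting) require.
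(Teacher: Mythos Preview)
Your argument is correct: the Leibniz expansion followed by distributing the diagonal sums naturally indexes terms by functional digraphs $H_f$ with in-degree one at every vertex except $i$, and your cycle-by-cycle sign analysis (identity versus the $t$-cycle permutation, contributing $1$ and $(-1)^{t-1}(-1)^t=-1$ respectively) is the standard and correct way to see that every non-arborescent $H_f$ cancels while each $B\in\mathcal B_i$ survives with coefficient $+1$. The propagation argument forcing $\pi=\mathrm{id}$ off the cyclic part is also right, and your remark that the identity holds formally over any coefficient ring is apt for the downstream applications.

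As for comparison with the paper: the paper does not actually prove this theorem. It is stated in the preliminaries as a folklore generalisation of Tutte's Matrix Tree Theorem, with the remark that ``a proof can be found, \eg, in~\cite{ICALP17ToAppear}''. So your proposal supplies strictly more than the paper does here. Your route via Leibniz expansion and functional-digraph cancellation is one of the two classical proofs (the other goes through the Cauchy--Binet formula applied to a suitable incidence factorisation of $K$); either would be an acceptable way to fill in what the paper leaves as a citation.

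One small caveat worth recording if you write this up fully: your cancellation relies on cycles of length $t\ge 2$, and a self-loop at $j$ (a ``cycle'' of length one, where $f(j)=j$) would not cancel under your scheme. This is not a flaw in your reasoning but rather reflects that the symbolic Kirchhoff matrix as defined in~\eqref{Kdef} implicitly treats loops as contributing only to the diagonal; in the intended applications the digraphs are loopless, and the classical Laplacian convention makes loop contributions cancel on the diagonal anyway.
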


\paragraph{Planar Graphs, Perfect Matchings and Pfaffians} The Pfaffian is an important tool for polynomial-time counting algorithms,
closely related to perfect matchings of a graph. 
A square matrix $M \in \mathbb{R}^{n \times n}$ is \emph{skew-symmetric} if for every $i, j \in [n]$ it holds that $M(i,j) = -M(j,i)$. 
Let $M$ be skew-symmetric and of even dimension $2n$. For every partition of $[2n]$ into pairs $\{\{i_a,j_a\} \mid a \in [n]\}$, 
define a corresponding permutation $(i_1, j_1, \ldots, i_n, j_n)$ of $[2n]$ where $i_a<j_a$ for every $a \in [n]$ and $i_a < i_{a+1}$ for every $a \in [n-1]$.
Note that this is unique for every partition into pairs, and let $\Pi_n$ denote the set of permutation of partitions of $[2n]$.
Then the {\em Pfaffian} of $M$ can be defined as
\[
\pf(M) = \sum_{\pi \in \Pi_n} \sigma(\pi) \prod_{t=1}^n M(\pi(2t-1), \pi(2t)),
\]
where $\sigma(\pi)=  \pm 1$ is a sign term referred to as the \emph{signature}
of the permutation. The Pfaffian can be efficiently computed; in particular,
$\pf(M) = (\det(M))^2$. Among other applications, the Pfaffian can be used to
count the number of perfect matchings of a planar graph by computing an
orientation of the graph where every signature in the above sum is $+1$; see
Section~\ref{sec:colPerfMatch}.

\paragraph{Hash Families and Splitters} The notion of perfect hash family was employed
by Alon \etal~\cite{ColorCoding} when they introduced the
framework of color coding. Splitters are generalizations of perfect hash families; 
both are defined below. 

Let $\bar x=(x_1,\dots , x_n)\in [t]^n$ be a vector and $I=\{p_1,\dots
,p_{|I|}\}\subseteq [n]$, where $p_1<\dots <p_{|I|}$. Then $\bar
x[I]:=(x_{p_1}\dots ,x_{p_{|I|}})$. We say that $\bar x[I]$ {\em partitions $I$ almost
equally} if the cardinalities of the sets $\{i\in I:\ x_{i}=q\}$, $q\in [t]$
differ from each other by at most 1.
\begin{definition}[Splitter]\label{def:enrichedHash}
  An~\emph{$(n,k,t)$-splitter} is a family of vectors~$\mathcal S \subseteq [t]^n$
  such that for every index set~$I \in {[n] \choose k}$ there exists at least
  one vector~$\bar x \in \mathcal S$ such that~$\bar x[I]$ partitions~$I$
  almost equally. The \emph{size} of the splitter is~$|\mathcal S|$.
\end{definition}

\begin{definition}[Perfect hash family]
  An~$(n,k,k)$-splitter~$\mathcal S \subseteq [k]^n$ is called an
  \emph{$(n,k)$-perfect hash family}. In other words, for every index set~$I
  \in { [n] \choose k}$ there exists a vector~$\bar x \in \mathcal S$ such that~$\bar
  x[I]$ is a permutation of~$[k]$.
\end{definition}

\noindent
Note that the members of an $(n,k,t)$-splitter can equivalently be
interpreted as vectors from~$[t]^n$, as functions that map~$[n]$ into~$[t]$,
or as partitions of~$[n]$ into~$t$ blocks.

\section{Our Approach}\label{sec:framework}

\noindent
Let us now elaborate on the four main steps that constitute our approach.

\begin{itemize}
  \item[\circled{1}] \textbf{Polynomial Identification} \\
    Associate variables $X = \{x_1,x_2,\ldots,x_n\}$ with some elements (\eg,
    vertices or edges) of the input instance and identify a polynomial $P(x_1, \ldots ,
    x_n)$ over the reals that satisfies the following properties.
    \begin{itemize}
      \item $P(x_1, \ldots , x_n)$ has a monomial with at least $k$ distinct
            variables, called a {\em witnessing monomial}~$W$, if and only if the
            input instance is a yes-instance;
			\item All witnessing monomials of $P(x_1, \ldots , x_n)$ (in standard form) have non-negative coefficients;
      \item For any partition~$\mathcal C$ of~$X$ into~$k$ blocks and
            any assignment of the variables~$y_1,\ldots,y_k$ 
            \GG{($y_i$ corresponds to block $i$)},
            the polynomial $(P/{\mathcal C})(y_1,\ldots, y_k)$ can be evaluated efficiently
            using polynomial space.\footnote{The evaluation may use operations
            such as subtraction and division.}
    \end{itemize}

  \item[\circled{2}] \textbf{Derivation of Coloring Guides} \\
    Define a partition~$\mathcal S = S_1 \uplus S_2 \ldots S_p \uplus S_\bot$
    of the variables~$X$
    with the following property: if there exists a witnessing
    monomial~$W$, then in particular there exists such a monomial 
    with~$|W \cap S_i| = 1$ for~$i\in [p]$ and~$|W \cap S_\bot| = k - p$.
    We say that such a partition~$\cal S$ is a {\em coloring guide}.
    Note that we might consider more than one guide, \eg by a suitable
    branching procedure. In this case, we apply the following steps to
    all the generated guides and the above property needs to hold for
    at least one of the generated guides.
    % The
    % computation may not necessarily take polynomial time. For example, it
    % might be that using branching/exhaustive search guided by some structural
    % observations, we can construct $\OO^*(f(k))$ families (that can be
    % enumerated with polynomial delay) such that at least one of them would be
    % a guide. Then, we would consider each family individually when we proceed
    % to the next~step.

  \item[\circled{3}] \textbf{Color Coding \& Derandomization} \\
    Color the variables~$S_\bot$ with~$k-p$ colors uniformly at random and
    let the resulting color partition be~$\mathcal C := C_1 \uplus C_2 \uplus \ldots \uplus C_k$
    (where~$C_i = S_i$ for~$i \leq p$ and~$\bigcup_{i > p} C_i = S_\bot$).
    To derandomize this step, use an~$(n,k{-}p)$-perfect hash family~$\mathcal F$ that is enumerable
    with polynomial space (\cf Theorem~\ref{thm:polyDelayHash} stated shortly) to color~$S_\bot$.
    Proceed with the  next step for every coloring~$\mathcal C$.

  \item[\circled{4}] \textbf{Coefficient Extraction} \\ 
    Test whether~$P/{\mathcal C}$ contains a monomial $W$ in which all
    variables~$y_1,\ldots,y_k$ appear with a standard inclusion-exclusion
    algorithm. For example, we can apply Lemma~\ref{lem:coef} stated shortly for $P/{\mathcal C}$
    by  evaluating the corresponding $Q(y_1,\dots ,y_k)\neq 0$ at $y_i=1$ for all $i\in [k]$. Clearly, such a $W$ exists if and only if
    $Q(1,1,\dots ,1)\neq 0$.
    If so, conclude that~$P$ contains a witnessing monomial and 
    return that the instance is a yes-instance.
\end{itemize}

\noindent
We remark that Naor \etal~\cite{Splitters} proved that an $(n,k)$-perfect
hash family of size $\OO^*(e^{k+o(k)})$ can be computed in time
$\OO^*(e^{k+o(k)})$. They claimed that their construction can be
modified so that it is not required to compute the family ``at once'', but the
vectors in the family can be enumerated with polynomial delay. However, a
proof of the latter claim \FR{has not been published} (\GG{the proof was deferred to
a full version of that paper which never appeared}). 

In Section~\ref{sec:polySpaceHash}
we prove a more general theorem from which this claim can be derived as a corollary.
Importantly, our construction requires only polynomial space, a feature that
the construction by Naor \etal does not achieve.

% FR: The phrasing might seem a little awkward, I had to get rid of a nasty
% overflow.
\begin{theorem}\label{thm:polyDelayHash}
  There exists an $(n,k,\alpha k)$-splitter of size
  $k^{O(1)}  \tradeoff(\alpha)^{k + o(k)} \log n$ for every $\alpha \ge 1$.
  Moreover, the members of the family can be enumerated deterministically
  with polynomial delay using polynomial space.
\end{theorem}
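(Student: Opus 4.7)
The plan is to follow the two-stage ``universe reduction, then small-universe splitter'' paradigm of Naor, Schulman, and Srinivasan, but to arrange both stages so that every splitter member is produced from a short index by a polynomial-time, polynomial-space procedure; iterating the indices then yields a polynomial-delay, polynomial-space enumeration of the composed family.

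\textbf{Stage 1 (universe reduction).} First I would build an $(n,k,k^2)$-splitter $\mathcal F_1$ of size $k^{O(1)} \log n$. Because $k/k^2 \le 1$, ``almost equal'' on a $k$-subset here simply means ``injective on it'', so $\mathcal F_1$ is a classical $(n,k,k^2)$-perfect hash family. A standard algebraic construction suffices: identify $[n]$ with a subset of $\mathbb F_q$ for a prime power $q = \Theta(k^2)$, and let $\mathcal F_1$ consist of evaluations of polynomials over $\mathbb F_q$ of degree $O(\log n / \log k)$; each member is specified by polynomially many field coefficients, evaluates at any point in polynomial time, and is enumerated by iterating lexicographically through coefficient tuples in polynomial space.

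\textbf{Stage 2 (small-universe splitter).} Next, I would construct a $(k^2, k, \alpha k)$-splitter $\mathcal F_2$ of size $\tradeoff(\alpha)^{k + o(k)} k^{O(1)}$. For a uniformly random $\bar x \in [\alpha k]^{k^2}$ and fixed $I \in \binom{[k^2]}{k}$, the probability that $\bar x[I]$ splits $I$ almost equally---which for $\alpha \ge 1$ is equivalent to $\bar x$ being injective on $I$---equals
\[
  \frac{(\alpha k)!}{((\alpha-1)k)!\,(\alpha k)^k} = \tradeoff(\alpha)^{-k} \cdot k^{-O(1)},
\]
where the right-hand identity follows from the Stirling bound recalled in the preliminaries. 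A union bound over the at most $k^{2k}$ relevant subsets then shows that a random family of the target size is already a splitter with positive probability. The main obstacle is realising this family with \emph{polynomial delay and polynomial space}: the textbook method of conditional expectations is polynomial time but would need to remember which $k$-subsets remain uncovered, a list that can be exponentially long. I would circumvent this by working inside a structured, polynomially-indexable sample space (for instance, limited-independence vectors obtained as evaluations of low-degree polynomials over $\mathbb F_{\alpha k}$) that is rich enough to reproduce the probability bound above up to $k^{O(1)}$ factors. Enumerating the first $\tradeoff(\alpha)^{k+o(k)} k^{O(1)}$ elements of this sample space in lexicographic order still yields a splitter, via the same union bound applied inside the sample space, and the $j$-th element is decoded from $j$ alone in polynomial time and space.

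\textbf{Composition.} Finally, set $\mathcal F = \{\, f_2 \circ f_1 : f_1 \in \mathcal F_1,\ f_2 \in \mathcal F_2\,\}$. For any $I \in \binom{[n]}{k}$, some $f_1 \in \mathcal F_1$ is injective on $I$, so $I' := f_1(I) \in \binom{[k^2]}{k}$; then some $f_2 \in \mathcal F_2$ splits $I'$ almost equally, and hence $f_2 \circ f_1$ splits $I$ almost equally. The total size is $|\mathcal F_1| \cdot |\mathcal F_2| = k^{O(1)} \tradeoff(\alpha)^{k+o(k)} \log n$, and two nested index loops over the descriptions of $\mathcal F_1$ and $\mathcal F_2$ deliver the claimed polynomial-delay, polynomial-space enumeration.
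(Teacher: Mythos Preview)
Your high-level decomposition (universe reduction to $[k^2]$, then a small-universe splitter, then compose) is the same as the paper's, but Stage~2 contains a real gap. The sentence ``enumerating the first $\tradeoff(\alpha)^{k+o(k)} k^{O(1)}$ elements of this sample space in lexicographic order still yields a splitter, via the same union bound applied inside the sample space'' is not justified and in general false. The union bound (or a coupon-collector computation) tells you that a \emph{random} subfamily of that size drawn from the $k$-wise independent space is a splitter with positive probability; it says nothing about a fixed lexicographic \emph{prefix}. If instead you output the entire $k$-wise independent space (which \emph{is} guaranteed to cover every $I$, since each $I$ has positive mass), its size is $\Theta(k^{2k})$, not $\tradeoff(\alpha)^{k+o(k)}$. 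To actually extract a family of the target size one needs the greedy covering argument---precisely the step you flagged as requiring exponential space, since it must track which of the $\binom{k^2}{k}$ index sets remain uncovered.

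The paper resolves this not by finding a cleverer sample space, but by adding a further layer of decomposition so that the greedy construction is only ever invoked on a \emph{logarithmic-size} instance. Concretely, it proves a ``wrapping'' lemma (Lemma~\ref{lemma:wrapping}) that turns an indexed $(k^2,k/\ell,t/\ell)$-splitter into an indexed $(n,k,t)$-splitter by combining the $(n,k,k^2)$-family $\mathcal A$ with the interval splitter $\mathcal B(k^2,k,\ell)$ and $\ell$ independent copies of the inner splitter. Applying this lemma \emph{twice} (once with $\ell=\log k$, once with $\ell=k/\log^2 k$) drives the inner parameters down to $\big((k/\log k)^2,\log k,\alpha\log k\big)$; at that scale the greedy construction of Lemma~\ref{lemma:new-splitter} runs in time and space polynomial in $k$, and the two wrapping layers contribute only a $2^{o(k)}$ factor to the size while keeping everything polynomially indexable. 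Your proposal is missing this recursive reduction to $k'=\log k$; without it, Stage~2 as written does not go through. (A minor side issue: in Stage~1 you cannot embed $[n]$ into $\mathbb F_q$ with $q=\Theta(k^2)$ when $n\gg k^2$; the paper simply cites the Naor--Schulman--Srinivasan construction here.)
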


\noindent
In case we are interested in polynomial space, we compute $\cal F$
simply as an $(n,k{-}p)$-perfect hash family. Otherwise, we compute $\cal F$
as an $(n,k{-}p,\alphaopt(k{-}p))$-splitter with a suitable constant~$\alphaopt
\approx \frac{4}{3}$. In the latter scenario, we can
run Steps~$\circled{3}$ and~$\circled{4}$ of our approach in time
$\OO^*(\tradeoff(\alphaopt)^{k-p + o(k)}2^{\alphaopt(k-p)+p})$. This requires
exponential space, as there are ${\alphaopt(k-p)+p \choose k}$ monomials for
which we need to check divisibility (see Lemma~\ref{lem:coef}), but for every
$I\subseteq [\alphaopt(k-p)+p]$, we would evaluate $P_{-I}$ only once rather
than once per such monomial. Then, we can rely on the following result proved
in Section \ref{sec:prop}.

\begin{proposition}\label{prop:enrichCalc}
  %For $k\in\mathbb{N}$ and $\ell=\frac{1}{3}k$, $\displaystyle{\frac{(k+\ell)^{k}}{{k+\ell \choose k}\cdot k!} \cdot 2^{k+\ell}} = \OO^*(4.312^k)$.
  There exists a constant~$\alphaopt \approx \frac{4}{3}$ such that using an
  $(n,k-p,\alphaopt(k-p))$-splitter and exponential space, we can run Steps
  $\circled{3}$ and $\circled{4}$ of our approach in time
  $\OO^*(4.312^{k-p}2^p)$.
\end{proposition}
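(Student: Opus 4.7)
The plan is to instantiate Steps $\circled{3}$ and $\circled{4}$ using an $(n, k{-}p, \alpha(k{-}p))$-splitter $\mathcal F$ in place of a perfect hash family, and then optimize over $\alpha$. Set $N := \alpha(k{-}p)+p$. For each member $\bar x$ of $\mathcal F$ (enumerable with polynomial delay via Theorem~\ref{thm:polyDelayHash}), form a coloring $\mathcal C$ of the variable set~$X$ into $N$ blocks: give every $S_i$, $i\in[p]$, its own color and color $S_\bot$ according to $\bar x$. First I would verify the coverage claim: if $P$ has a witnessing monomial then, by the coloring guide, it has one $W$ with exactly one variable in each $S_i$ and exactly $k{-}p$ variables in $S_\bot$; the splitter property then yields some $\bar x$ that partitions these $k{-}p$ variables almost equally among the $\alpha(k{-}p)$ blocks of $S_\bot$, and since $\alpha\ge 1$ this forces every block to contain at most one of them. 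Hence under the resulting $\mathcal C$ the $k$ variables of $W$ all receive distinct colors, so $W$ maps to a monomial of $P/\mathcal C$ with $k$ distinct $y$-variables.

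For Step~$\circled{4}$ on a fixed coloring $\mathcal C$, the task is to detect whether $P/\mathcal C$ contains a monomial whose $y$-support has size at least $k$. I would use inclusion-exclusion across the $N$ colors: for every $T \subseteq [N]$ evaluate
\[
   v_T \;:=\; (P/\mathcal C)\big|_{y_i = 0 \text{ for } i\in T,\; y_i = 1 \text{ for } i\notin T} \;=\; \sum_{M:\,\mathrm{supp}(M)\cap T=\emptyset}\coef_{P/\mathcal C}(M),
\]
then run a subset (Möbius) transform to obtain $g(S) := \sum_{T\subseteq S}(-1)^{|T|} v_T$ for every $S\subseteq [N]$ in $\OO^*(2^N)$ total time, using exponential space to store the $v_T$. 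Swapping the order of summation gives $g(S) = \sum_{M:\,S\subseteq\mathrm{supp}(M)}\coef_{P/\mathcal C}(M)$. For $|S|=k$, every contributing $M$ has $|\mathrm{supp}(M)|\ge k$ and is therefore the image of at least one witnessing monomial of $P$, which has non-negative coefficient by Step~$\circled{1}$ (variable identification only aggregates coefficients, so non-negativity on witnesses is preserved). Thus $g(S)\ge 0$, and there exists $S$ with $|S|=k$ and $g(S)>0$ precisely when $\mathcal C$ lifts at least one witnessing monomial of $P$ to a monomial of $P/\mathcal C$ with $k$ distinct $y$-variables; by the coverage claim above, some $\bar x \in \mathcal F$ produces such a $\mathcal C$ whenever $P$ has a witnessing monomial.

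Combining the two steps, the total running time is
\[
   |\mathcal F|\cdot \OO^*(2^N) \;=\; \OO^*\!\left(\tradeoff(\alpha)^{k-p+o(k)}\cdot 2^{\alpha(k-p)+p}\right) \;=\; \OO^*\!\left(2^p \cdot \bigl(\tradeoff(\alpha)\cdot 2^\alpha\bigr)^{k-p}\right),
\]
using Theorem~\ref{thm:polyDelayHash} for $|\mathcal F|$. It remains to minimize $f(\alpha) := \tradeoff(\alpha)\cdot 2^\alpha = (1-1/\alpha)^{\alpha-1}e\cdot 2^\alpha$ over $\alpha\ge 1$. Differentiating $\log f$ yields the stationarity condition $\log\bigl(2(1-1/\alpha)\bigr) = -1/\alpha$, solved numerically by $\alphaopt\approx 4/3$ with $f(\alphaopt) < 4.312$; substituting back gives the claimed $\OO^*(4.312^{k-p}\,2^p)$ bound.

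The main technical points to get right are the small combinatorial observation that an almost-equal partition of $k{-}p$ items into $\alpha(k{-}p)\ge k{-}p$ parts has every part of size at most one (so the splitter behaves like a perfect hash family on the witness support precisely because $\alpha\ge 1$), and the reorganization of the inclusion-exclusion via the zeta transform so that the exponential-space cost is $\OO^*(2^N)$ rather than the naive $\OO^*(\binom{N}{k}2^k)$. The numerical optimization of $\alpha$ and the preservation of witness non-negativity under $P\mapsto P/\mathcal C$ are routine.
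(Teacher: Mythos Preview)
Your proof is correct and follows essentially the same route as the paper: use an $(n,k{-}p,\alpha(k{-}p))$-splitter, run the inclusion--exclusion sieve over all $2^N$ subsets with $N=\alpha(k{-}p)+p$ using exponential space (the paper phrases this as ``evaluate $P_{-I}$ only once per $I$'' rather than invoking the M\"obius transform explicitly), and then minimize $\tradeoff(\alpha)\cdot 2^\alpha$ over $\alpha$, arriving at the same stationarity condition $\ln\bigl(2(1-1/\alpha)\bigr)=-1/\alpha$ with root $\alphaopt\approx 1.302$. The paper's written proof in Section~\ref{sec:prop} is almost entirely the calculus optimization (treating the $p=0$ case and observing it extends trivially), whereas you spell out the coverage argument and the $\OO^*(2^N)$ sieving more carefully; but the substance is the same.
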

In the context of randomized algorithms, a slightly weaker result ($\OO^*(4.314^k)$) was obtained 
by H\"{u}ffner \etal~\cite{ColorEngineering}.

The following Lemma~\ref{lem:coef} provides a
way to extract from $P$ only monomials divided by a certain term, the crucial
operation in step~\circled{4}. The lemma is a modification of a lemma by Wahlstr{\"o}hm~\cite{Wahl2013}
for polynomials over a field of characteristic two.

\begin{lemma}[Monomial sieving]\label{lem:coef}
  Let $P(x_1, \ldots , x_n)$ be a polynomial over reals, and $J \subseteq [n]$ a
  set of indices. For a set $I \subseteq [n],$ define $P_{- I}(x_1,\ldots  ,
  x_n) = P(y_1,\ldots  , y_n),$ where $y_i = 0$ for $i \in I$ and $y_i = x_i$
  otherwise. Define
    \begin{equation}\label{qeq}
    Q(x_1,\ldots , x_n) =\sum_{I\subseteq J} (-1)^{|I|} P_{- I}(x_1,\ldots , x_n).
    \end{equation}
  Then, for any monomial $M$ divisible by $\Pi_{i\in J} x_i$, we have $\coef_Q(M) = \coef_P(M),$ and for
  every other monomial $M$ we have $\coef_Q(M) = 0.$  
\end{lemma}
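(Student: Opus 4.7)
The plan is to expand $Q$ monomial by monomial and identify, for each monomial of $P$, exactly when it survives each substitution $P_{-I}$. Writing $P$ in standard form as $P = \sum_M c_M M$ with $c_M = \coef_P(M)$, and letting $S_M := \{i \in [n] : x_i \text{ divides } M\}$ denote the support of $M$, I would observe the following trivial fact: in the substitution defining $P_{-I}$, the monomial $M$ contributes $c_M \cdot M$ if $I \cap S_M = \emptyset$, and contributes $0$ otherwise, since at least one variable of $M$ is then sent to zero.

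Plugging this into the definition~\eqref{qeq} of $Q$ and interchanging the summations over $M$ and $I$, I would obtain
\[
  \coef_Q(M) \;=\; c_M \cdot \!\!\sum_{\substack{I \subseteq J \\ I \cap S_M = \emptyset}}\!\! (-1)^{|I|} \;=\; c_M \cdot \!\!\sum_{I \subseteq J \setminus S_M}\!\! (-1)^{|I|}.
\]
The inner sum is the classical binomial identity $\sum_{I \subseteq T} (-1)^{|I|}$, which evaluates to $1$ when $T = \emptyset$ and to $0$ when $T \neq \emptyset$. Here $T = J \setminus S_M$, and $T = \emptyset$ is equivalent to $J \subseteq S_M$, which in turn is precisely the statement that $M$ is divisible by $\prod_{i \in J} x_i$.

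Putting the two cases together gives $\coef_Q(M) = c_M = \coef_P(M)$ whenever $\prod_{i \in J} x_i$ divides $M$, and $\coef_Q(M) = 0$ otherwise, which is the claim. There is no real obstacle to this argument: it is a one-line inclusion–exclusion once the bookkeeping is set up. The only point to double-check is that the interchange of sums is valid, which it is because $P$ has only finitely many monomials and the outer sum in~\eqref{qeq} is over the finite set $2^J$, so both sums are finite and the rearrangement is unconditionally justified.
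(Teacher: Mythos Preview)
Your proof is correct and follows essentially the same approach as the paper: both arguments fix a monomial $M$, observe that $M$ survives the substitution $P_{-I}$ exactly when $I$ avoids the support of $M$, and then reduce the alternating sum over $I\subseteq J$ to $\sum_{I\subseteq J\setminus S_M}(-1)^{|I|}$ (the paper's $J'$ is your $J\setminus S_M$). Your write-up is slightly cleaner in that you handle both cases uniformly via the identity $\sum_{I\subseteq T}(-1)^{|I|}=[T=\emptyset]$, whereas the paper treats the divisible and non-divisible cases separately, but the underlying argument is the same inclusion--exclusion.
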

\begin{proof}
  Consider a monomial $M$ with non-zero coefficient in $P.$ Observe first that
  for every $I \subseteq [n],$ we have $\coef_{P_{-I}}(M) = \coef_P (M)$ if no
  variable $x_i$ with $i \in I$ occurs in $M$, and $\coef_{P_{-I}}(M) = 0$,
  otherwise. Now, if $\Pi_{i\in J} x_i$ divides $M$, then out of the $2^{|J|}$
  evaluations for $I\subseteq J$, the monomial $M$ occurs in exactly one
  (namely, $I = \emptyset$). Thus,  $\coef_Q(M) = \coef_P(M).$

  If $\Pi_{i\in J} x_i$ does not divide $M$, \GG{note that} $J' = \{i \in J :
  x_i \mbox{ does not divide }M\},$ \GG{is nonempty} and observe that $\coef_{P_{-I}}(M) = \coef_P(M)$ for every $I \subseteq J'.$ Thus, sum (\ref{qeq}) for $M$ only is 
  $$\sum_{I\subseteq J}(-1)^{|I|} M=\sum_{I\subseteq J'}(-1)^{|I|} M=M\sum_{I\subseteq J'} (-1)^{|I|}=M(1-1)^{|J'|}=0.$$
  %This is an even number of occurrences of the same monomial in $Q$, of which
  %half are with the same coefficient and the other half are with the same
  %coefficient but of the opposite sign, which implies %that they sum to zero.
  Applying the above results individually to every monomial in $P$ accounts
  for all occurrences of monomials in the sum defining $Q;$ the result follows. 
\end{proof}

\section{$k$-Internal Out-Branching}\label{sec:maxInOutBranch}

\noindent
In a digraph $D$, a {\em matching} is a collection of arcs without common
vertices. The following lemma establishes useful connections between matchings
and out-trees/out-branchings, which is the key to our computation of a
coloring guide.

\begin{lemma}\label{lemma:matching} \MZ{The following statements hold.} \leavevmode
\begin{itemize}
  \item[(i)] Let $T$ be an out-tree with $k\ge 0$ internal vertices. Then $T$ has a matching of size at least $k/2$.
  \item[(ii)] Let $D=(V,A)$ be a digraph containing an out-branching, and $M$ a matching in $D$. 
  Then, in polynomial time, we can find an out-branching of $D$ for which no arc of $M$ has both end-vertices as leaves.
\end{itemize}
\end{lemma}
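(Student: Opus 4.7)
The plan is to handle part~(i) by a short induction on~$k$ and part~(ii) by a local swap algorithm whose termination is forced by a monotone matching-based potential.

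For part~(i), I would induct on~$k$. The base case $k=0$ is trivial. For the inductive step, choose an internal vertex $v$ of $T$ of maximum depth; by maximality all children of $v$ are leaves. Pick any child $u$ of $v$ and put the arc $(v,u)$ into the matching. Then delete $v$ together with all of its children to obtain a smaller out-tree $T'$. The only internal vertex certainly lost is $v$ itself; at worst, the parent of $v$ also becomes a leaf (if $v$ was its sole child in $T$), so $T'$ has at least $k-2$ internal vertices. Applying the inductive hypothesis to $T'$ and combining with $(v,u)$ yields a matching of $T$ of size at least $(k-2)/2 + 1 = k/2$.

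For part~(ii), I plan the following iterative algorithm. Start with any out-branching $B$ of $D$, which exists by hypothesis and is computable in polynomial time by the structural observation. Call an arc $(u,v)\in M$ \emph{bad} if both $u$ and $v$ are leaves of the current $B$. While a bad arc exists, let $w$ be the unique in-neighbour of $v$ in $B$ and update $B\leftarrow (B\setminus\{(w,v)\})\cup\{(u,v)\}$. The vertices $u,v,w$ are pairwise distinct (since $u$ is a leaf while $w$ has out-degree at least one), $v$ remains a leaf, and $u$ becomes internal, so $B$ remains an out-branching with the same root.

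The delicate point is termination. The key invariant I would track is $|B\cap M|$. Each swap inserts the arc $(u,v)\in M$, which was absent from $B$ beforehand (else $u$ would already have been internal), while the removed arc $(w,v)$ cannot lie in $M$, because $M$ is a matching and $(u,v)\in M$ already covers the endpoint~$v$. Thus $|B\cap M|$ grows by exactly one per iteration. Crucially, once an arc $(x,y)\in M$ is placed into $B$, no later swap can remove it: any subsequent swap deletes only the in-arc of some $v'\neq y$, again by the matching property. Hence $|B\cap M|$ is strictly monotone and bounded by $|M|\leq n/2$, so the loop halts after polynomially many iterations, at which point $B$ has the required property. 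The main obstacle I foresee is not correctness of an individual swap but excluding the possibility that repairing one bad arc creates or resurrects another; the invariant $|B\cap M|$ resolves this cleanly by permanently locking the installed $M$-arcs into $B$.
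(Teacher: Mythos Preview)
Your proof is correct. Part~(ii) is essentially identical to the paper's argument: the same swap (replace the in-arc of $v$ by the bad matching arc $(u,v)$), the same potential $|B\cap M|$, and the same observation that the removed arc cannot lie in $M$ because $M$ is a matching.

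For part~(i) you take a genuinely different route. The paper defines the \emph{height} of a vertex (length of a longest path to a reachable leaf), lets $k_i$ be the number of height-$i$ vertices, observes $k_1\ge k_2$, takes a matching $M_1$ of size $k_1$ between the height-$1$ vertices and some leaves, deletes all leaves and height-$1$ vertices to get $T'$ with $k-k_1-k_2$ internal vertices, and combines $M_1$ with the inductively obtained matching of $T'$. Your argument is more elementary: you peel off a single deepest internal vertex $v$ together with its (necessarily leaf) children, losing at most two internal vertices ($v$ and possibly its parent), and recurse. Both work; the paper's version removes a whole layer per step, yours removes one subtree per step, but the arithmetic comes out the same. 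Your version avoids the auxiliary height counts and the inequality $k_1\ge k_2$, at the cost of a slightly longer induction.
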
 
\begin{proof}
(1) We prove it by induction on $k\ge 0$. The claim obviously holds for $k=0$, so assume that $k\ge 1$. \GG{The {\em height} of a vertex $v$ in $T$ is the length of a longest path from $v$ to a leaf of $T$ reachable from $v$. Let $k_i$ be the number of vertices of $T$ of height $i$. }
%Let $k_1$ be the number of {\em pre-leaves}, \ie vertices of $T$ whose only out-neighbors are leaves and $k_2$ the number of {\em prepre-leaves}, \ie vertices of $T$  whose only out-neighbors are pre-leaves.   
Observe that $k_1\ge k_2$ and that $T$ has a matching $M_1$ with $k_1$ edges 
whose vertices are some leaves and all \GG{vertices of height 1}. Let $T'$ be an out-tree obtained from $T$ by deleting all leaves and \GG{vertices of height 1}. Observe that $T'$ has $k-k_1-k_2$ internal vertices and thus by induction hypothesis $T'$ has a matching $M_2$ of size at least $(k-k_1-k_2)/2\ge k/2 -k_1.$ Thus, the matching $M_1\cup M_2$ of $T$ is of size at least $k/2$. 

%\vspace{1mm}
\smallskip

(2) Let $B$ be an out-branching of $D$ and suppose that both end-vertices of some arc $xy$ of $M$ are leaves in $B$. Then add $xy$ to $B$ and delete $zy$ from $B$, where $z$ is the in-neighbor of $y$ in $B$. In the resulting out-branching $B'$, $x$ is an internal vertex. Notice that $zy$ does not belong to $M$.
Hence, $B'$ contains % an end-vertex from one more arc of $M$ than $B.$ 
one more arc of $M$ than $B$. 
Starting with an arbitrary out-branching and repeating the above exchange operation at most $|M|$ times, we will get an out-branching in which no arc of $M$ has both end-vertices as leaves. 
This process can be completed in polynomial time.
\end{proof}

\noindent
We now prove the main theorem of this section.
  
\begin{theorem}\label{thm:kIOB}
  There exists a deterministic polynomial-space \polyIOB-time
  algorithm for {\sc $k$-Internal Out-Branching}.
\end{theorem}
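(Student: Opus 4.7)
My plan is to instantiate the four-step framework from Section~\ref{sec:framework}, combining the Matrix Tree Theorem with a matching-based coloring guide embedded in a bounded search tree. For Step~\circled{1}, I iterate over all $n$ candidate roots $r\in V(D)$. For each $r$ I take the symbolic Kirchoff matrix~\eqref{Kdef} and substitute every arc variable $x_{ij}$ with the vertex variable $y_i$ of its tail. Theorem~\ref{thm:det-matrix-tree} then produces
\[
  P_r \;=\; \det K_{\bar r}(D) \;=\; \sum_{B\in\mathcal B_r} \prod_{(i,j)\in A(B)} y_i \;=\; \sum_{B\in\mathcal B_r}\prod_{v\in V(D)} y_v^{d^+_B(v)}.
\]
The variable $y_v$ appears in the monomial of $B$ precisely when $d^+_B(v)\ge 1$, i.e.\ when $v$ is internal in $B$, and every monomial enters with coefficient $+1$, so no cancellations occur. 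Hence $P_r$ has a monomial with at least $k$ distinct variables iff $D$ admits a $k$-internal out-branching rooted at $r$; moreover $P_r$ is evaluable in polynomial time and polynomial space as a single determinant.

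For Step~\circled{2}, I compute a maximum matching $M$ in the underlying undirected multigraph of $D$. By Lemma~\ref{lemma:matching}(i) the answer is immediately ``no'' if $|M|<k/2$; otherwise Lemma~\ref{lemma:matching}(ii), together with the observation that the exchange step in its proof never decreases the number of internal vertices, supplies a witness $k$-internal out-branching in which every edge of $M$ has at least one internal endpoint. I then open a bounded search tree with $2^{|M|}$ leaves, guessing for each $\{u,v\}\in M$ which of $u,v$ is internal. In the correct leaf I obtain a set $P$ of $p=|M|$ designated internal vertices and declare the singleton blocks $S_i = \{y_{v_i}\}$ for $v_i\in P$ and $S_\bot = \{y_v: v\notin P\}$, so that the coloring-guide condition $|W\cap S_i|=1$ for all $i\in[p]$ is consistent with a witness. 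Steps~\circled{3} and~\circled{4} are then routine: I enumerate an $(n,k-p)$-perfect hash family on $S_\bot$ via Theorem~\ref{thm:polyDelayHash}, yielding $\OO^*(e^{(k-p)+o(k)})$ colorings with polynomial delay and polynomial space, and for each coloring $\mathcal C$ I invoke Lemma~\ref{lem:coef} with $J$ equal to the full set of $k$ super-variables of $P_r/\mathcal C$ and test $Q(1,\ldots,1)\ne 0$ in time $\OO^*(2^k)$. A positive outcome in any root, branch, or coloring certifies a yes-instance, and the polynomial-space bound is automatic because hash family enumeration, determinant evaluation, and the inclusion-exclusion sum each run in polynomial space.

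The hard part will be pinning down the exact constant $\basPolyIOB$. The direct product $\OO^*\bigl(2^{|M|}\cdot e^{(k-|M|)+o(k)}\cdot 2^k\bigr)$ from the pipeline above is only $\OO^*(4^k)$ in the favourable case $|M|\ge k$ and deteriorates to roughly $\OO^*(4.66^k)$ when $|M|\approx k/2$. I expect that reaching $\basPolyIOB$ requires refining the bounded search tree with additional reduction and branching rules that, in the problematic regime $|M|\approx k/2$, either enlarge the effective matching or commit further vertices as internal without paying the full factor $2$ per matched edge, followed by a measure-based (amortized) branching analysis that balances the search-tree exponent against the hash family and inclusion-exclusion exponents. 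All remaining ingredients --- polynomial, coloring guide, hash family enumeration, and coefficient extraction --- are comparatively straightforward once this delicate branching is in place.
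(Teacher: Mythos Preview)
Your framework is sound but your proposal does not prove the theorem: you explicitly concede that your pipeline yields only $\OO^*(4.66^k)$ in the worst case $|M|=k/2$, and your proposed fix (``additional reduction and branching rules'' plus a measure-based analysis) is speculation, not an argument. The actual missing idea is different and simpler.

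The overpayment comes from the $2^{|M|}$ branching in which you guess, for each matched edge $\{u,v\}$, \emph{which} endpoint is internal. This information is never needed. In Step~\circled{4} Lemma~\ref{lem:coef} only asks that every colour appear in the monomial; it does not care which original variable supplied it. Hence for an edge with (at least) one internal endpoint you can put \emph{both} endpoints into a single colour block $\{y_u,y_v\}$: the witnessing monomial contains $y_u$ or $y_v$, so after quotienting it contains that colour, and no branching is required. The only branching left is over the set $M'\subseteq M$ of edges whose \emph{both} endpoints are internal; for those you use two singleton blocks $\{y_u\},\{y_v\}$. With $|M|=t$ and $|M'|=c$ the guide has $t+c$ blocks and $S_\bot=V(D)\setminus V(M)$, so the hash family needs only $k-t-c$ colours. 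The cost becomes
\[
\sum_{c=0}^{k-t}\binom{t}{c}\,e^{(k-t-c)(1+o(1))}\,2^{k}
\;\le\; 2^{k}e^{k-t}\,(1+e^{-1})^{t}
\;=\;(2e)^{k}\Bigl(\tfrac{e+1}{e^{2}}\Bigr)^{t},
\]
which for $t\ge k/2$ is at most $(4(e+1))^{k/2}\approx 3.857^{k}$. In short, replace your factor $2^{t}$ by $(1+1/e)^{t}$ by allowing two-element colour blocks instead of insisting on singletons; no deeper branching analysis is needed.
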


\begin{proof}
Let $D$ be a digraph, $V(D)=[n]$, and $M$ a maximum matching in $D$ of size
$t$. By Lemma \ref{lemma:matching}, we may assume that $k/2\le t\le k$ as otherwise we
can solve $k$-IOB on $D$ in polynomial time: \FR{For $t < k/2$, 
no out-tree with $k$ internal vertices exists and for~$t > k$, we can construct
a solution in polynomial time}.
Now we will follow our approach.
\smallskip

\noindent\circled{1}:
We associate one variable~$x_v$ with every vertex~$v \in [n]$. Replace every
variable $x_{ij}$ in (\ref{Kdef}) by $x_i$ and observe that now by
Theorem~\ref{thm:det-matrix-tree} we have that if the polynomial
$\det(K_\FR{{\bar r}}(D))$ over variables~$x_1,\ldots,x_n$ contains a monomial
with at least~$k$ different variables, then there exists an out-branching
rooted at~\FR{$r$} with at least~$k$ internal vertices. Note that for a
$k$-coloring~$\mathcal C$ of the variables we can evaluate the
polynomial~$\det(K_\FR{{\bar r}}(D))/ \mathcal C$ over
variables~$y_1,\ldots,y_k$ in polynomial time. We guess the root vertex~\FR{$r$}
and fix it for the following steps.
\smallskip

\noindent\circled{2}:
For every $c \in [k]\cup \{0\}$, we consider all sets $M'$ of $c$ edges
of $M$ in which both vertices are supposed to be internal vertices of some
$k$-internal out-branching (the edges of $M\setminus M'$ contain at least one
internal vertex in some $k$-internal out-branchings by Lemma \ref{lemma:matching} (ii)).

For the current choice of~$M'$, our coloring guide~$\mathcal S$ looks as follows.\footnote{Formally, we \MZ{somewhat} abuse terminology and notation for coloring guide here, but since the notions are very close,
we think it will not lead to a confusion.}
For every arc~$uv \in M'$, we add the sets~$\{u\}$ and~$\{v\}$ to~$\mathcal S$.
For every arc~$xy \in M \setminus M'$, we add the set~$\{x,y\}$. Finally, 
$S_\bot := V(D) \setminus V(M)$~contains all vertices outside of the matching.
\smallskip

\noindent\circled{3}: 
\FR{With~$t + c$ internal vertices of the out-tree in~$V(M)$ there are
$k-t-c$ vertices left to be located in~$V(D)\setminus V(M)$}.
Using an $(n,k-t-c)$-perfect hash family, we color the vertices of~$S_\bot$
and obtain a $k$-coloring $\mathcal C := C_1 \uplus C_2 \uplus \ldots \uplus C_{t+c} \uplus C_\bot$
of the variables~$X$ by combining the coloring from the hash family with
the coloring guide~$\mathcal S$.
\smallskip

\noindent\circled{4}:
We apply Lemma~\ref{lem:coef} to the polynomial~$\det(K_\FR{{\bar r}}(D)) /
\mathcal C$ over~$y_1,\ldots,y_k$~to search for a monomial that contains all
$k$ variables. If such a monomial exists, then~$\det(K_\FR{{\bar r}}(D))$ contains
a monomial with~$k$ distinct variables and we conclude
that~$D$ contains an out-branching rooted at~\FR{$r$} with $k$ or more internal
vertices. Otherwise, if we have not yet exhausted all colorings in the hash
family, we return to Step~\circled{3}, and if we have not yet exhausted all
guesses for $M'$, we return to Step~\circled{2}. If neither of these
conditions is true, we conclude that there is no $k$-internal out-branching
rooted at~\FR{$r$}.
\smallskip

\noindent 
Let us analyse the exponential part~$f(k)$ of the running time for the above steps. 
Step~\circled{2} to~\circled{4} take time at most
\GG{
\[
  \sum_{c=0}^{k-t}  {t\choose c} \tau(1)^{(k-t-c)(1+o(1))}2^k= \sum_{c=0}^{k-t} {t\choose c} e^{(k-t-c)(1+o(1))}2^k,
\]%
}%
using the~$(n,k-t-c)$-perfect hash family from Theorem~\ref{thm:polyDelayHash} (with $\alpha = 1$).
Consider the term~${t\choose c}e^{-(t+c)}$ and set $c = \beta t$, where $\beta \in (0,1)$. 
Using the well-known bound ${t \choose \beta t} \le 2^{tH(\beta)}$, where $H(\beta)=-\log (\beta^{\beta}(1-\beta)^{1-\beta})$,
we arrive at
\[
  {t\choose \beta t}e^{-t(1+\beta)} 
    \leq (\beta^{\beta}(1-\beta)^{1-\beta}e^{1+\beta})^{-t}\le \left(\frac{e+1}{e^2}\right)^t.
\]
\GG{
The second inequality above follows from the fact that $\min_{0<\beta<1}g(\beta)=\frac{e}{e+1}$, where $g(\beta)=\beta^{\beta}(1-\beta)^{1-\beta}e^{\beta}$, which can be verified by differentiating $g(\beta)$.
}
It is easy to verify that ${t\choose \beta t}e^{-t(1+\beta)}\le (\frac{e+1}{e^2})^t$ for $\beta=0$ and $\beta=1$ as well.
Therefore, 
\[
  \sum_{c=0}^{k-t} {t\choose c} e^{(k-t-c)(1+o(1))}2^k
    \leq k\left(\frac{e+1}{e^2}\right)^t (2e)^{k(1+o(1))}.
\] 
Observe that $\frac{e+1}{e^2}<1$ and thus $(\frac{e+1}{e^2})^t$ decreases
monotonically as $t$ grows. Therefore, $(\frac{e+1}{e^2})^t \leq (\frac{e+1}{e^2})^{k/2}$ 
as $t\ge k/2.$ Thus, $f(k)$ is bounded from above by
$k(4(e+1))^{k(1+o(1))/2} < k3.857^k$ for~$k$ large enough.
Since all of the above steps require polynomial space,
the algorithms requires polynomial space as well.
\end{proof}

\noindent
If $(D,k)$ is a positive instance of $k$-IOB, then we can find a $k$-internal
out-branching of $D$ also in polynomial space and time $O^*(3.857^k)$ by the
usual self-reducibility argument: Consider every arc $a$ of $D$ at a time
and remove it from $D$ if $(D-a,k)$ is a positive instance of $k$-IOB
until no arc can be removed. The non-empty remaining graph spans an out-branching
with~$k$ internal leaves.

Let us now see how our approach yields a faster algorithm if we use
exponential space.

\begin{theorem}
  There exists an exponential-space \expIOB-time
  algorithm for {\sc $k$-Internal Out-Branching}.
\end{theorem}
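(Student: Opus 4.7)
The plan is to reuse the scaffold of Theorem~\ref{thm:kIOB} essentially verbatim, replacing only the instantiation of Steps~$\circled{3}$--$\circled{4}$. Concretely, I would still fix a root~$r$, work with the Kirchhoff polynomial $\det(K_{\bar r}(D))$ (Step~$\circled{1}$), and extract the matching-based coloring guide by branching on a subset~$M'\subseteq M$ of size~$c$ whose endpoints are all required to be internal (Step~$\circled{2}$), so that $p=t+c$ classes are fixed by the guide and $k-p=k-t-c$ remain to be covered. In Step~$\circled{3}$ I would now enumerate an $(n,k-p,\alphaopt(k-p))$-splitter via Theorem~\ref{thm:polyDelayHash} with $\alphaopt\approx 4/3$; in Step~$\circled{4}$ I would invoke Proposition~\ref{prop:enrichCalc}, which, by precomputing the $2^{\alphaopt(k-p)+p}$ projections~$P_{-I}$ once and folding the ${\alphaopt(k-p)+p\choose k}$ candidate monomials into a single aggregate via Lemma~\ref{lem:coef}, performs the coefficient extraction in time $\OO^*(4.312^{k-p}\cdot 2^{p})$ per branch.

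Summing the per-branch cost over the coloring guide gives, up to polynomial factors, a total of
\[
  \sum_{c=0}^{k-t}\binom{t}{c}\cdot 4.312^{k-t-c}\cdot 2^{t+c}.
\]
I would then bound this with the same binomial/entropy analysis used in Theorem~\ref{thm:kIOB}: set $c=\beta t$, use $\binom{t}{\beta t}\le 2^{tH(\beta)}$, optimise over~$\beta$, and then observe that the resulting function of~$t$ is monotone, so that by Lemma~\ref{lemma:matching}(i) its worst case is at $t=k/2$. Together with the choice of~$\alphaopt$ dictated by Proposition~\ref{prop:enrichCalc}, the exponential base is supposed to collapse to the stated~$\basExpIOB$.

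The space is exponential solely because Proposition~\ref{prop:enrichCalc} materialises the $2^{\alphaopt(k-p)+p}$ values~$P_{-I}$ per guide/coloring; the splitter enumeration of Theorem~\ref{thm:polyDelayHash}, the evaluation of $\det(K_{\bar r}(D))/\mathcal C$, and the outer loops over~$r$, $c$, and $M'$ all stay polynomial-space, and correctness as well as self-reducibility to extract an actual out-branching go through unchanged. The only subtle point I anticipate is the arithmetic of the final optimisation: verifying that the entropy-optimising~$\beta$ from the matching branching interacts with the splitter parameter~$\alphaopt$ to yield exactly the stated base rather than a slightly weaker constant. Beyond that, the argument is a mechanical transcription of the polynomial-space proof with~$e^{k-p}\cdot 2^{k}$ replaced by~$4.312^{k-p}\cdot 2^{p}$.
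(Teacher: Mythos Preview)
Your plan is structurally sound, but the arithmetic you flagged as the ``only subtle point'' is in fact fatal for the claimed base. Carrying out the sum you wrote down gives
\[
  \sum_{c=0}^{k-t}\binom{t}{c}\,4.312^{\,k-t-c}\,2^{\,t+c}
  \;\le\; 4.312^{\,k-t}\,2^{\,t}\sum_{c=0}^{t}\binom{t}{c}\Bigl(\tfrac{2}{4.312}\Bigr)^{c}
  \;=\; 4.312^{\,k}\Bigl(\tfrac{2\cdot 6.312}{4.312^{2}}\Bigr)^{t}
  \;=\; 4.312^{\,k}\cdot 0.679^{\,t},
\]
and plugging in the worst case $t=k/2$ yields $(4.312\cdot\sqrt{0.679})^{k}\approx 3.55^{k}$, not $\basExpIOB$. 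The loss comes precisely from paying the explicit $\binom{t}{c}$ branching over $M'$ \emph{on top of} the $2^{p}$ factor inside Proposition~\ref{prop:enrichCalc}; no choice of~$\beta$ recovers the missing constant.

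The paper closes this gap by \emph{not} branching over $M'$ at all. Instead it introduces three variables $x_i,x_i',x_i''$ per matching edge~$e_i$ and sieves, for each edge, either for $x_i$ (one internal endpoint) or for $\{x_i',x_i''\}$ (both internal). All $\binom{t}{c}$ choices of $M'$ are then processed \emph{simultaneously} by truncated fast subset convolution (trimmed Moebius inversion~\cite{BjorklundHKK10}): one computes every evaluation $Q_{F}(\mathbf 1,\mathbf 1)$ for $F$ ranging over the whole family $\mathcal F_c$ in time proportional to the size of the down-closure $\mathcal I_c=\{I\subseteq F:F\in\mathcal F_c\}$, rather than $|\mathcal F_c|\cdot 2^{|F|}$. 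Bounding $|\mathcal I_c|$ level by level (the factor per edge is $3$ for the two-variable mode, not $4$) and optimising over~$c$ is what produces~$\basExpIOB$. So the missing idea is this amortisation via trimmed Moebius inversion; the straightforward substitution you propose gives a correct algorithm, just with a weaker base of roughly~$3.55$.
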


\noindent
The exponential-space strategy will use the exponential-space version of the
color-coding and sieving result, but we will in addition use \emph{truncated}
fast subset convolution to \MW{further speed up the algorithm, by rolling up
the $\binom{t}{c}$ guesses in Step~\circled{2} and the 
applications of Lemma~\ref{lem:coef} into a single exponential-space computation}. 
%$\binom{t}{c}2^{t+c}$-part of the
%running time, by computing the applications of Lemma~\ref{lem:coef}
%`in parallel' or simultaneously using exponential space.
This was presented in Bj\"orklund \etal~\cite{BjorklundHKK10} (called trimmed Moebius inversion).

\begin{proof}
  We will present the proof as close to the structure of the proof of Theorem~\ref{thm:kIOB} as possible. 
  However, since the truncated fast subset convolution does not strictly speaking
  follow the coloring guide method, some discrepancies between the proofs are unavoidadble. 
  Let $D$ be a digraph with vertex set $V=[n]$. 
  
  \noindent \circled{1}:  
  As before, let $M$ a maximum matching in $D$ of size $t$ with $k/2\le t\le k$.
  Also guess a root vertex $r$ and keep it fixed through the following steps. 
  Number the edges of $M$ as $M=\{e_1, \ldots, e_t\}$. With every
  edge $e_i \in M$ we associate three variables $x_i$, $x_i'$, $x_i''$,
  and with every vertex $v \in V \setminus V(M)$ we associate a variable $x_v$.
  Further, for every edge $e_i = uv \in M$ define $x(u)=x_ix_i'$ and $x(v)=x_ix_i''$,
  and for every other vertex $v \in V \setminus V(M)$ define $x(v)=x_v$. 
  Replace every variable $x_{ij}$ in (\ref{Kdef}) by $x(i)$. 
  We will sieve the polynomial $\det(K_{\bar r}(D))$ according to
  two modes for every edge $e_i \in M$: Either only one endpoint
  of $e_i$ is internal in the out-branching, in which case
  we sieve for the variable $x_i$, or both are, in which
  case we sieve for both $x_i'$ and $x_i''$. Using 
  truncated fast subset convolution, we will sieve for
  these options in parallel.
  \smallskip
 
  \noindent \circled{2}+\circled{3}:
  Guess the number $c \in [k-t] \cup \{0\}$ of edges of $M$
  for which both endpoints are internal in the solution,
  but do not guess an explicit subset $M'$.
  Let~$k' := k-t-c$ be the number of internal vertices of the sought solution that 
  lie outside~$V(M)$, and recall the constant $\alphaopt$ from Proposition~\ref{prop:enrichCalc}.
  Further fix a coloring $f$ of the vertices of $V \setminus V(M)$
  into $\alphaopt \cdot k'$ colors, and assume that the $k'$
  internal vertices not present in $M$ all receive distinct colors by $f$. 
  Repeat the following steps for every choice of $f$ from an $(n, k', \alphaopt k')$-splitter. 
  \smallskip

  \noindent \circled{4}:
  We now describe the improved parallel sieving strategy.
  Define $X=\{x_i, x_i', x_i'' \mid i \in [t]\}$,
  and let $P(X, Y)$ be the result of replacing every variable 
  $x_j$, $j \in V \setminus V(M)$, in $\det(K_{\bar r}(D))$ by $y_{f(j)}$.
  Consider one particular choice $M' \subseteq M$, $|M'|=c$,
  of edges where both endpoints are assumed to be internal 
  vertices in the solution.
  We would then be seeking a monomial of $P(X,Y)$
  which contains both variables $x_i'$, $x_i''$ for every edge $e_i \in M'$,
  the variable $x_i$ for every edge $e_i \in (M \setminus M')$,
  and additionally variables of $k'$ further colors $y_{\ell} \in Y$.
  Combining this across all choices of $M'$ with $|M'|=c$, 
  %with respect to subsets of $X$ 
  we will thus want to run the sieving algorithm for the family of sets
  % \[
  % \mathcal{S}_c := \bigcup_{I \in \binom{[t]}{c}} (\{x_i', x_i'' \mid i \in I\} \cup \{x_i \mid i \in [t] \setminus I\}).
  % \]
  \[
  \mathcal{F}_c := \Big\{  \{x_i'\}_{i \in I} \cup \{x_i''\}_{i \in I} \cup \{x_i\}_{i \in [t]\setminus I} \,\Big|\, I \in \binom{[t]}{c} \Big\} \times \binom{[\alphaopt k']}{k'}.
  \]%
  For each $F \in \mathcal{F}_c$, let $Q_F(X,Y)$ be polynomial defined in Lemma~\ref{lem:coef}.
  Up to the choice of $c$ and $f$, the instance is positive if and only if there is
  some $F$ such that $Q_F(X, Y) \neq 0$, and we can use $X=\mathbf{1}$, $Y=\mathbf{1}$ for the evaluation, where $\mathbf{1}$ denotes a vector with all components equal 1. 
  
  Using the truncated fast subset convolution, we can compute all
  evaluations $Q_{F}(\mathbf{1},\mathbf{1})$ as above in time 
  proportional to the number of subsets of sets $F \in \mathcal{F}_c$, 
  up to a polynomial factor~\cite{BjorklundHKK10}.
  Concretely, let $\mathcal{I}_c=\{I \subseteq F \mid F \in \mathcal{F}_c\}$.
  Then the truncated fast subset convolution runs in time $\OO^*(|\mathcal{I}_c|)$,
  and we repeat this procedure for every choice $c$ and for every member $f$ of
  the $(n,p,\alphaopt k')$-splitter,

  \emph{Running time.} To simplify the analysis of the running time, we 
  write $\mathcal{J}_c=\{I \cap X \mid I \in \mathcal{I}_c\}$.
  Then $|\mathcal{I}_c| \leq |\mathcal{J}_c| \cdot 2^{\alphaopt k'}$,
  and the product of the size of the splitter and $2^{\alphaopt k'}$
  can be bounded as $\OO^*(4.312^{k'})$ as in Proposition~\ref{prop:enrichCalc}.
  It remains to bound $|\mathcal{J}_c|$. We further split $\mathcal{J}_c$
  into $c+1$ \emph{levels}, where a set $I \in \mathcal{J}_c$
  belongs to level $i$, $0 \leq i \leq c$, if there are exactly
  $i$ edges $e_j \in M$ such that $I \cap \{x_j', x_j''\} \neq \emptyset$.
  Thus, the number of sets at level $i$ of $\mathcal{J}_c$ equals
  \[
  \binom{t}{i} 3^i \sum_{j=0}^{t-c} \binom{t-i}{j} \leq \binom{t}{i} 3^i 2^{t-i},
  \]
  where %$\binom{t-i}{\leq t-c}=\sum_{p=0}^{t-c}  \binom{t-i}{k^\prime}$,
  the factor of $3$ comes from the three options $\{x_j'\}$, $\{x_j''\}$, $\{x_j', x_j''\}$
  and the expression in the summation corresponds to whether $x_j \in I$
  for the remaining edges. Note that the upper bound is essentially tight assuming $t-c > (t-i)/2$. 
  Therefore, if we split out the total work per level, we get
  \begin{equation}\label{eq1}
    \max_{i \in [c]\cup \{0\}} \binom{t}{i} 3^i 2^{t-i} 4.312^{k-t-c} = 2^t4.312^{k-t-c} \max_{i \in [c]\cup \{0\}} \binom{t}{i} 1.5^i.
  \end{equation}
  To obtain an upper bound for (\ref{eq1}), we will use the following observation.
  
  \begin{claim}
     Let $b$ be a real, $t$ an integer, and $x$ an integral variable. The function
     $g(x)={t \choose x}b^x$ monotonically increases if and only if $x\le
     \frac{b(t+1)}{b+1}$.
  \end{claim}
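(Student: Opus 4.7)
The plan is to analyse the discrete ratio of consecutive values of $g$ and convert monotonicity into a linear inequality in $x$. First I would write down the ratio between adjacent values, using the standard identity $\binom{t}{x} = \frac{t-x+1}{x}\binom{t}{x-1}$, to obtain
\[
\frac{g(x)}{g(x-1)} \;=\; \frac{b(t-x+1)}{x}.
\]

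Then I would observe that $g$ increases at the step from $x-1$ to $x$, i.e.\ $g(x) \geq g(x-1)$, if and only if this ratio is at least one. Clearing the denominator (assuming $x \geq 1$, the regime of interest) and rearranging turns the inequality $b(t-x+1) \geq x$ into $b(t+1) \geq x(b+1)$, which is equivalent to the claimed bound $x \leq \frac{b(t+1)}{b+1}$. The converse direction is immediate from the very same chain of equivalences: whenever $x > \frac{b(t+1)}{b+1}$ the ratio drops strictly below one, so $g$ strictly decreases at that step, ruling out monotonic increase. This handles both the ``if'' and the ``only if'' parts in one sweep.

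I do not expect any substantial obstacle; the proof is a two-line algebraic manipulation. The only minor point worth flagging in the write-up is the interpretive convention: the bound $\frac{b(t+1)}{b+1}$ arises from comparing $g(x)$ with $g(x-1)$, as opposed to comparing $g(x)$ with $g(x+1)$ (which would shift the threshold by one). Since the claim is subsequently applied to locate the maximising index of $\binom{t}{i}\,1.5^i$ in the running-time bound~\eqref{eq1}, this convention is precisely the one needed: plugging $b=1.5$ yields that the maximum of $g$ over integer $x \in \{0,\ldots,c\}$ is attained either at $x=c$ or at $x = \lfloor \tfrac{1.5(t+1)}{2.5} \rfloor$, whichever is smaller.
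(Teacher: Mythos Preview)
Your argument is correct and is exactly the approach taken in the paper: both proofs compute the ratio $g(x)/g(x-1)=\frac{b(t-x+1)}{x}$ and observe that it is at least one precisely when $x\le \frac{b(t+1)}{b+1}$. You simply spell out the algebra that the paper leaves implicit.
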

  \begin{proof} It suffices to observe that $g(x)/g(x-1)\ge 1$ if and only if $x\le \frac{b(t+1)}{b+1}$. \end{proof}

  \noindent
  Furthermore, since $g(x+1), g(x-1) \leq g(x) \cdot bt$ for the function $g(x)$ defined in the claim,
  up to lower-order terms we need not be concerned with the precise value of $x$. 
  We now consider two cases. First, assume that $c \geq 0.6t$.
  Then by the claim above, we have
  \[
  \binom{t}{i} 1.5^i \leq t^{O(1)} \binom{t}{\lceil 0.6t\rceil} 1.5^{0.6t} \leq t^{O(1)} 2^{H(0.6)t} 1.5^{0.6t},
  \]
  where we used the well-known bound
  $
  \binom{t}{\alpha t} \leq 2^{H(\alpha) t}
  $
  for the binary entropy function $H(\alpha) = -\alpha \log \alpha - (1-\alpha)\log (1-\alpha)$.
  Inserting the exponential part into (\ref{eq1}), we get a bound of
  \[
  2^t 4.312^{k-t-0.6t} 2^{H(0.6)t} 1.5^{0.6t} <  4.312^k0.482484^t <3^k
  \]
 as $t \geq k/2$. Thus, we next consider $c<0.6t$. Then by the claim, we have
  \[
  \max_{i \in [c] \cup \{0\}} \binom{t}{i} 1.5^i = \binom{t}{c} 1.5^c.
  \]
  Inserting it into (\ref{eq1}) and rearranging, we get a bound of 
  \[
  2^t 4.312^{k-t} \binom{t}{c} (1.5/4.312)^c.
  \]
  Let  $a=1.5/4.312$ and $\alpha=a/(a+1)$.
  By the claim, 
  the latter half of the product above is maximized 
  around $c=\alpha t$ giving the following upper bound to the product:
  \[
  t^{O(1)} 2^t 4.312^{k-t} 2^{H(\alpha) t} a^{\alpha t} < t^{O(1)}4.312^k0.625172^t = \OO^*(3.41^k),
  \]
as $t \geq k/2$. Finally, since $\alpha < 0.6$, an integer around $\alpha t$ is a valid value of $c$.
%  \GG{ 
%    Let  $a=1.5/4.312$, $\alpha=a/(a+1)$ and
%  $$\binom{t}{j_{\max}} a^{j_{\max}}=\max_{i \in [c]\cup \{0\}} \binom{t}{i} a^i.$$
%    By the above claim, $j_{\max}=j$ or $j'$, where $j=\lfloor \alpha (t+1)\rfloor/(t+1)$ and $j'=\lceil \alpha (t+1)\rceil/(t+1).$
%    Observe that $g(x-1)=g(x)O(t)$ for the function $g(x)$ defined in the above claim.
%     Since $c\ge j$ and $j'=j$ or $j+1$, (\ref{eq1}) is bounded from above by 
%  \begin{equation}\label{eq3} 
%      2^t 4.312^{k-t} \binom{t}{j} a^j O(t).
%    \end{equation}
%  }%
%    \noindent Now we will bound $\binom{t}{j} a^j.$ By the above claim, 
%    \[
%      \binom{t}{j} a^j\le \binom{t}{\alpha (t+1)}a^{\alpha (t+1)}\le \binom{t+1}{\alpha (t+1)}a^{\alpha (t+1)}.
%    \]
%    Using the well-known bound ${t \choose \alpha t} \le 2^{tH(\alpha)}$, where $H(\alpha)=-\log (\alpha^{\alpha}(1-\alpha)^{1-\alpha})$,
%    we arrive at $\binom{t+1}{\alpha (t+1)}\le K2^{0.833834\cdot t},$ 
%    where $K$ is a constant. Observing that $a^{\alpha (t+1)}\le K'2^{-0.393166\cdot t},$ for some constant $K'$, we conclude
%    that $\binom{t}{j} a^j\le KK' 2^{0.440668\cdot t}$. Since $t\ge k/2$,
%    we conclude that (\ref{eq3}) is bounded from above by $O^*(3.423^k)$.
\end{proof}
  
\noindent \MW{We remark that while the running time bound is
  essentially tight for the algorithm we describe,
  we cannot exclude that there is a more efficient way of
  implementing the sieving, e.g., using less than three
  variables per edge in $M$.}
  
 \section{$k$-Colorful Out-Branching}\label{sec:colOutBranch}

\noindent
First, let us argue that {\sc $k$-Colorful Out-Branching} is \NP-hard. To this
end, we have a simple reduction from {\sc Hamilton Path}: in a digraph $D$
define the color of every arc outgoing from $v\in V(D)$ to be $c_v$. Clearly,
$D$ has a {\sc Hamilton Path} if and only if it has a $k$-colorful 
out-branching with $k=|V(D)|-1$. Thus, on any graph class where {\sc Hamilton
Path} is \NP-hard, {\sc $k$-Colorful Out-Branching} is \NP-hard as well.
In fact, we can have the same simple reduction from $k$-IOB to
{\sc $k$-Colorful Out-Branching}, which shows that the latter generalizes the
former.

We follow the structure of the proof of Theorem \ref{thm:kIOB} to show
the following result for the colorful variant.

\begin{theorem}\label{thm:colorfulOutBranch}
  There exists a deterministic polynomial-space $\OO^*((2e)^{k+o(k)})$-time
  (and exponential-space $\OO^*(4.312^{k}$)-time) algorithm for 
  {\sc $k$-Colorful Out-Branching}.
\end{theorem}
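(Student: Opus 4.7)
The plan is to follow the four-step framework verbatim, essentially mimicking the proof of Theorem~\ref{thm:kIOB} but with the trivial coloring guide, since the problem offers no matching-like combinatorial structure to exploit. Write $D$ for the input arc-colored digraph with color set $\Gamma$, and associate a single variable $x_c$ with every $c \in \Gamma$.

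For Step~\circled{1}, I would first guess the root $r \in V(D)$ of the sought out-branching (an $n$-factor overhead). In the symbolic Kirchoff matrix~(\ref{Kdef}), I would replace each variable $x_{ij}$ by $x_{c(ij)}$, where $c(ij)$ denotes the color of the arc $ij$. By Theorem~\ref{thm:det-matrix-tree} the determinant then equals
\[
  \det(K_{\bar r}(D)) \;=\; \sum_{B \in \mathcal{B}_r} \prod_{ij \in A(B)} x_{c(ij)},
\]
so each out-branching rooted at $r$ contributes a monomial with coefficient $+1$; after collecting like terms all coefficients are non-negative integers, and the polynomial has a monomial in at least $k$ distinct variables if and only if $D$ admits an out-branching rooted at $r$ using at least $k$ colors. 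For any partition $\mathcal C$ of $\Gamma$ into $k$ blocks and any values of the representatives $y_1,\ldots,y_k$, the polynomial $\det(K_{\bar r}(D))/\mathcal C$ can be evaluated by substituting the corresponding values into the matrix entries and computing a single determinant in polynomial time and space.

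Step~\circled{2} reduces to the trivial coloring guide with $p = 0$ and $S_\bot = \Gamma$. For Steps~\circled{3} and~\circled{4} in the polynomial-space variant, I would enumerate an $(|\Gamma|, k)$-perfect hash family via Theorem~\ref{thm:polyDelayHash} (taking $\alpha = 1$), and for every generated coloring $\mathcal C$ apply Lemma~\ref{lem:coef} with $J = [k]$ to $\det(K_{\bar r}(D))/\mathcal C$, evaluated at $y_1 = \cdots = y_k = 1$; a non-zero outcome certifies a yes-instance. The exponential factor of the running time is then $\tau(1)^{k+o(k)} \cdot 2^k = (2e)^{k+o(k)}$, as desired. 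For the exponential-space variant I would replace Steps~\circled{3}+\circled{4} wholesale by Proposition~\ref{prop:enrichCalc} with $p = 0$, immediately yielding the $\OO^*(4.312^k)$ bound.

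The only point that genuinely deserves care is the non-negativity clause of Step~\circled{1}: after the color substitution, different out-branchings sharing a color multiset collapse to the same monomial, so one must be sure no sign-cancellation occurs. This is automatic here because Theorem~\ref{thm:det-matrix-tree} yields a sum of monomials all of coefficient $+1$, so identification of variables along color classes can only increase the coefficients; a $k$-colorful out-branching cannot be hidden from the sieve. Beyond this observation, every step is a direct instantiation of the template developed for Theorem~\ref{thm:kIOB}.
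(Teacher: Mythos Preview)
Your proposal is correct and follows essentially the same approach as the paper: guess the root, substitute color variables into the Kirchoff determinant, use the trivial coloring guide, and run the standard hash-family/sieving combination (Theorem~\ref{thm:polyDelayHash} with Lemma~\ref{lem:coef} for polynomial space, Proposition~\ref{prop:enrichCalc} for exponential space). Your explicit remark on non-negativity of coefficients after color identification is a nice addition that the paper leaves implicit.
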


\begin{proof}
  Let $D=(V,A)$ be a directed multigraph, and let $c(a)$ be the color of $a$
  for every arc $a \in A$. We guess the root vertex~\FR{$r$} and fix it for the
  following steps. Let $C$ be the full set of colors used, and without loss of
  generality let $C=[t]$, $t \geq k$, and $A=\{a_1,\ldots,a_m\}$. Create a set
  $Z=\{z_1, \ldots, z_m\}$ of corresponding variables, and define the
  polynomial $P'(z_1,\ldots,z_m) = \det(K_\FR{{\bar r}}(D))$. By 
  Theorem~\ref{thm:det-matrix-tree}, our problem has now been reduced to determining whether
  there is a monomial $M$ in $P'$ such that the corresponding out-branching
  contains arcs of at least $k$ different colors. We solve this problem
  according to our approach as follows. Here, we first describe the
  polynomial-space algorithm, and later explain how it can be sped-up at the
  cost of exponential space.
  \smallskip
  	
  \noindent\circled{1}: Create a set $X=\{x_1, \ldots, x_t\}$ of variables
  corresponding to colors. We define the polynomial $P$ by
  $P(x_1,\ldots,x_t)=P'(x_{col(1)},\ldots,x_{col(m)})$, where $col(i)=c(e_i)$
  for all $i\in[m]$. Then, as we argued above, $D$ has a $k$-colorful out-branching 
  rooted at \FR{$r$} if and only if $P$ has a witnessing monomial.
  Moreover, note that for a $k$-coloring~$\mathcal C$ of the variables, we can
  evaluate the polynomial~$P/ \mathcal C$ over variables~$y_1,\ldots,y_k$ in
  polynomial time, as this can be done by evaluating~$P'$.
  \smallskip

  \noindent\circled{2}: In this application, our coloring guide is empty
  ($S_\bot = V(D)$). Thus, it can be ignored in the following steps.
  \smallskip

  \noindent\circled{3}: Using a $(t,k)$-perfect hash family, we recolor the colors of~$C$
  and obtain a $k$-coloring $\mathcal C := C_1 \uplus C_2 \uplus \ldots \uplus
  C_{k}$ of the variables~$X$.
  \smallskip

  \noindent\circled{4}: We apply Lemma~\ref{lem:coef} to the polynomial~$P / \mathcal C$
  over~$y_1,\ldots,y_k$~to search for a monomial that contains all $k$
  variables. If such a monomial exists, then~$P$ contains a monomial with~$k$
  or more distinct variables and we conclude that~$D$ contains an out-branching rooted at~\FR{$r$} with $k$ 
  or more colors. Otherwise, if we have not
  yet exhausted all colorings in the hash family, we return to
  Step~\circled{3}, and otherwise we conclude that the there is no
  $k$-colorful out-branching rooted at~\FR{$r$}. \looseness-1

  \smallskip
  \noindent 
  Let us now analyse the exponential part~$f(k)$ of the running time
  for the above steps. By Theorem~\ref{thm:polyDelayHash} there are
  $\OO^*(e^{k+o(k)})$ colorings in our $(t,k)$-perfect hash family, and by
  Lemma~\ref{lem:coef} the sieving can be performed in time $\OO^*(2^k)$ for
  every individual coloring. Hence, the total time is $\OO^*((2e)^{k+o(k)})$
  as claimed.

  Finally, we consider the case where we may use exponential space. Then, we
  use a $(t,k,\alphaopt k)$-splitter rather than a $(t,k)$-perfect hash
  family. Thus, by Proposition~\ref{prop:enrichCalc}, the total running time
  is $\OO^*(4.312^k)$, as claimed.
\end{proof}

\noindent
We can improve the above result if the input graph is colored with
\emph{exactly}~$k$ colors: In Step~\circled{2} of the above proof we
do not use an empty coloring guide, but instead a partition of $A$ such that
two arcs are in the same block of the partition if and only if they have the
same color.  Notice that in this case, in Step \circled{3} we construct a
$(t,0)$-perfect hash family, which means that every color simply retains its
original color. Thus, we derive the following corollary.

\begin{corollary}\label{corollary:colorfulOutBranch}
  There exists a deterministic polynomial-space $\OO^*(2^k)$-time algorithm for
  {\sc $k$-Colorful Out-Branching} on directed multigraph with exactly $k$
  colors.
\end{corollary}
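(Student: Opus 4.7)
The plan is to instantiate the four-step approach exactly as in the proof of Theorem~\ref{thm:colorfulOutBranch}, but to exploit the given coloring as the coloring guide itself, thereby eliminating the hash family entirely. After guessing the root~$r$, I would still create one variable~$x_j$ per color~$j \in [k]$ and define $P(x_1,\ldots,x_k) = P'(x_{col(1)},\ldots,x_{col(m)})$, where $P' = \det(K_{\bar r}(D))$. By Theorem~\ref{thm:det-matrix-tree}, the witnessing monomials of~$P$ correspond precisely to out-branchings of~$D$ rooted at~$r$ whose arcs use at least~$k$ distinct colors.

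In Step~\circled{2} I would set the coloring guide to be the singleton partition $\{x_1\} \uplus \{x_2\} \uplus \cdots \uplus \{x_k\}$, with $S_\bot = \emptyset$. Since the guide already has $p = k$ blocks, all color variables are pinned down and no random coloring of~$S_\bot$ is required; in Step~\circled{3} we therefore invoke only the trivial $(k,0)$-perfect hash family, so a single coloring~$\mathcal C$ is processed. The validity of the guide is immediate: any $k$-colorful out-branching rooted at~$r$ yields a monomial of~$P$ containing one representative from each singleton block, which remains a witnessing monomial after identification.

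In Step~\circled{4} I would apply Lemma~\ref{lem:coef} to $P/\mathcal C$ over $y_1,\ldots,y_k$ with $J = [k]$ and evaluate the resulting $Q$ at $y_i = 1$ for all $i \in [k]$; a non-zero value certifies a yes-instance. The running time is dominated by this inclusion-exclusion sum, which has $2^{|J|} = 2^k$ terms, each requiring a single determinant evaluation of $P/\mathcal C$ in polynomial time and polynomial space. Iterating over the~$n$ choices of the root~$r$ costs only a polynomial factor, so the total running time is $\OO^*(2^k)$ and the working space remains polynomial. There is no real obstacle here beyond confirming that the singleton guide is valid; the payoff is precisely the $e^{k+o(k)}$ factor from the hash family in Theorem~\ref{thm:colorfulOutBranch}, which is saved because the input hands us a partition into exactly~$k$ color classes for free.
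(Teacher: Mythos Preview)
Your proposal is correct and follows essentially the same approach as the paper: use the given color classes themselves as the coloring guide so that $S_\bot$ is empty, invoke only the trivial $(t,0)$-perfect hash family, and then run the $2^k$-term inclusion--exclusion of Lemma~\ref{lem:coef}. The paper phrases the guide as a partition of the arc set by color rather than as the singleton partition of the color variables, but these are the same thing.
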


\section{$k$-Colorful Perfect Matching on Planar Graphs}\label{sec:colPerfMatch}

\noindent
Let us first show the \NP-hardness of {\sc $k$-Colorful Perfect Matching}.

\begin{lemma}
  {\sc $k$-Colorful Perfect Matching} is \NP-hard on planar graphs of pathwidth 2.
\end{lemma}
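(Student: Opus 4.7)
The plan is to reduce from \textsc{Rainbow Matching} on edge-colored paths, which Le and Pfender showed to be \NP-hard. Given such an instance $(P, k)$ with $P = v_1v_2\cdots v_n$, I would construct an instance $(G, k+1)$ of \textsc{$k$-Colorful Perfect Matching} by taking $G$ to be the $2 \times n$ ladder $P \square P_2$: introduce fresh vertices $u_1, \ldots, u_n$, add rungs $v_iu_i$ for each $i$ and bottom edges $u_iu_{i+1}$ for $i \in [n{-}1]$, retain the original path coloring on the top edges $v_iv_{i+1}$, and assign a single fresh dummy color $d$ to every rung and every bottom edge.

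For correctness, I would rely on two observations. First, any perfect matching of $G$ must cover every $u$-vertex, and since no top edge is incident to any $u_i$, at least one rung or bottom edge (and hence the color $d$) appears in every perfect matching of $G$. Consequently, the number of distinct colors in such a matching equals $1$ plus the number of distinct top-colors it uses. Second, the set of top edges $T$ of any perfect matching is automatically a matching of $P$; conversely, any matching $M$ of $P$ extends to a perfect matching of $G$ by placing, for each $v_iv_{i+1} \in M$, the parallel bottom edge $u_iu_{i+1}$, and then covering each remaining column $\ell$ by the rung $v_\ell u_\ell$. Combining these, $(G, k+1)$ is a YES-instance if and only if $P$ admits a rainbow matching of size at least $k$.

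To finish, I would verify that $G$ is planar (immediate, as it embeds into the $2 \times n$ grid) and has pathwidth at most $2$, the latter via the sliding path decomposition with bags $\{v_i, u_i, v_{i+1}\}$ and $\{v_{i+1}, u_i, u_{i+1}\}$ for $i \in [n{-}1]$, each of size three. The main step requiring attention is arguably the extension of a rainbow matching of $P$ to a perfect matching of $G$; however, this is routine because the columns not touched by the chosen top edges can always be filled one by one using rungs, so no parity considerations arise.
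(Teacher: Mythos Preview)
Your proof is correct and essentially identical to the paper's: both reduce from \textsc{Rainbow Matching} on paths by building the $2\times n$ ladder with a single fresh dummy color on all new edges, and both argue correctness via the same extension/restriction between matchings of $P$ and perfect matchings of $G$. You additionally supply an explicit width-$2$ path decomposition, which the paper omits (it simply asserts the pathwidth bound).
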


\begin{proof}
  We present a reduction from {\sc Rainbow Matching} on paths, which is known to
  be \NP-hard~\cite{LeP14}. To this end, let $(P,k)$ be an instance of {\sc
  Rainbow Matching} on paths. Denote $P=v_1-v_2-\cdots-v_n$. Now, we construct
  an instance $(G,k+1)$ of {\sc $k$-Colorful Perfect Matching} as follows. Let
  $c^\star$ be a new color, and define $P'=v'_1-v'_2-\cdots-v'_n$ as a copy of
  $P$ where the color of each edge is $c^\star$. Then, $G$ is defined as the graph
  obtained from $P$ and $P'$ by adding the edge $v_i-v'_i$ of color $c^\star$
  for all $i\in[n]$. Clearly, $G$ is a planar graph of pathwidth 2.

  On the one hand, suppose that $(P,k)$ has a $k$-colorful matching $M$ of size
  $k$. Denote $I=\{i\in[n-1]: v_iv_{i+1}\in M\}$ and $I^+=\{i+1: i\in I\}.$ Then, $M'=M\cup
  \{v'_iv'_{i+1}: i\in I\}\cup\{v_iv'_i: i\notin I\cup I^+\}$ is a $(k+1)$-colorful
  perfect matching in $G$. On the other hand, suppose that $(G,k+1)$ has a
  $(k+1)$-colorful perfect matching $M$. Since the color of all edges outside
  $P$ is $c^\star$, the set of edges of $M$ that belong to $P$ is
  $k$-colorful. In particular, this set contains a $k$-colorful matching of
  size $k$.
\end{proof}

\noindent
For our algorithm, we will need the following famous result
which implies that planar perfect matchings
can be counted in polynomial time. We will need the following 
more detailed version of the original result:
%\todo{MW: According to Wikipedia, this result is attributed to (1) Temperley, Fischer, 1961; (2) Kasteleyn, 1963; (3) Kasteleyn, 1967. But (1) and (2) only concern lattice graphs / grid graphs. So I cited only the latter; perhaps more citations would be in order?}

\begin{theorem}[\cite{Kasteleyn67PM}]
  \label{thm:planarpfaffian}
  Let $G=(V,E)$ be a planar graph, and let $X=\{x_e \mid e \in E\}$
  be a collection of real-valued variables.
  There is a polynomial-time algorithm that produces a matrix $A$ 
  over $X$ such that
  \[
  \pf(A) = \sum_{M \in \mathcal{M}} \prod_{e \in M} x_e,
  \]
  where $\mathcal{M}$ ranges over all perfect matchings of $G$. 
\end{theorem}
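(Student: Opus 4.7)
The plan is to prove this by constructing a \emph{Pfaffian orientation} of~$G$ and taking $A$ to be the associated signed adjacency matrix with symbolic weights~$x_e$. Concretely, given any orientation $\vec G$ of~$G$, define the skew-symmetric matrix~$A$ by $A(i,j) = x_{ij}$ if $(i,j)$ is an arc of~$\vec G$, $A(i,j) = -x_{ij}$ if $(j,i)$ is an arc of~$\vec G$, and $A(i,j)=0$ otherwise. Expanding $\pf(A)$ by its definition, the only nonzero contributions come from permutations~$\pi \in \Pi_n$ whose pairs $\{\pi(2t-1),\pi(2t)\}$ form a perfect matching~$M$ of~$G$, and each such permutation contributes $\sigma(\pi) \cdot \mathrm{sgn}(\vec G, M) \prod_{e\in M} x_e$. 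The goal is therefore to pick~$\vec G$ so that all these combined signs equal~$+1$.

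First I would recall the standard criterion: $\vec G$ is Pfaffian if and only if for every pair of perfect matchings~$M_1, M_2$ of~$G$, every cycle in the symmetric difference~$M_1 \triangle M_2$ has an odd number of arcs oriented in each of the two cyclic directions (an ``oddly oriented'' cycle). It is straightforward (but a bit tedious) to check that this condition ensures that the sign associated to any two perfect matchings in the Pfaffian expansion agree, which in turn, after fixing a sign for one reference matching via a global sign flip if necessary, makes all signs~$+1$.

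Next, for planar graphs I would invoke Kasteleyn's local-to-global lemma: if a planar embedding of~$G$ admits an orientation in which every bounded face has an odd number of clockwise edges on its boundary, then every face-bounding cycle is oddly oriented, and by Euler's formula together with an inductive argument on the sum of faces enclosed by a cycle, \emph{every} cycle of even length in~$G$ is oddly oriented. Combined with the criterion above, this gives the Pfaffian property.

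It remains to construct such an orientation in polynomial time. The plan is to compute a planar embedding of~$G$ (in polynomial time by standard algorithms), choose a spanning tree~$T$ of~$G$ and orient its edges arbitrarily, then process the bounded faces in an order induced by a spanning tree of the dual graph rooted at the outer face. At each step, the face under consideration has all but one of its boundary edges already oriented, so we may choose the orientation of the remaining edge to make the count of clockwise edges on that face odd. After all bounded faces are processed, every bounded face has odd clockwise parity, and the matrix~$A$ defined as above satisfies $\pf(A) = \sum_{M \in \mathcal M} \prod_{e \in M} x_e$ (possibly after a global sign flip, which can be detected in polynomial time by computing $\pf(A)$ with some fixed indeterminate assignment producing a known matching). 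The main obstacle is the local-to-global step: showing that the odd-clockwise-per-face condition forces every separating even cycle to be oddly oriented is the genuinely nontrivial part of the argument and is where planarity is essential.
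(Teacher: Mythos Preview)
The paper does not prove this theorem at all; it is quoted as a classical result of Kasteleyn and used as a black box. Your sketch is the standard Kasteleyn argument and is essentially correct: build a skew-symmetric Tutte-type matrix from an orientation, reduce the sign problem to showing every alternating cycle is oddly oriented, achieve the face-parity condition via a dual spanning tree, and lift this to arbitrary nice cycles by the planar local-to-global lemma.

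One small tightening: the claim that ``every cycle of even length in~$G$ is oddly oriented'' is slightly stronger than what the face-parity condition actually gives and stronger than what you need. The inductive Euler-formula argument shows that a cycle~$C$ is oddly oriented provided the number of vertices of~$G$ strictly inside~$C$ has the right parity; for the Pfaffian application you only need this for cycles~$C$ such that $G - V(C)$ has a perfect matching, which forces that parity. Stating it for all even cycles is not quite right in general, so phrase the local-to-global step for \emph{nice} (central) cycles rather than all even cycles.
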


\noindent
Using our approach, we can now easily derive the following theorem in a manner very similar to the one used to prove Theorem \ref{thm:colorfulOutBranch}. For the sake of completeness, we present the required details.

\begin{theorem}
  {\sc $k$-Colorful Perfect Matching} on planar graphs  can be solved
  by a deterministic   algorithm in time \polyColor\ with polynomial space,
  and in time \expColor\ with exponential space.
\end{theorem}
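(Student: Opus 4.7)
The plan is to mirror the structure of the proof of Theorem~\ref{thm:colorfulOutBranch}, with Kasteleyn's Pfaffian construction (Theorem~\ref{thm:planarpfaffian}) playing the role of the Matrix Tree Theorem. Let $G=(V,E)$ be the input planar edge-colored graph with color function $c : E \to [t]$, and $E = \{e_1, \ldots, e_m\}$. For Step~\circled{1}, I would apply Theorem~\ref{thm:planarpfaffian} to obtain in polynomial time a matrix $A$ over variables $Z = \{z_1, \ldots, z_m\}$ such that
\[
  \pf(A) = \sum_{M \in \mathcal{M}} \prod_{e_i \in M} z_i,
\]
where $\mathcal{M}$ is the set of perfect matchings of $G$. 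I then introduce color variables $X = \{x_1, \ldots, x_t\}$ and define $P(x_1, \ldots, x_t)$ by substituting $z_i \mapsto x_{c(e_i)}$ in $A$ before evaluating the Pfaffian. By Theorem~\ref{thm:planarpfaffian} the only monomials appearing in $\pf(A)$ correspond to perfect matchings and all have coefficient~$+1$, so after the substitution each monomial of $P$ has a strictly positive coefficient equal to the number of perfect matchings of $G$ whose colors give that monomial. In particular $P$ contains a monomial with at least $k$ distinct variables if and only if $G$ admits a $k$-colorful perfect matching, and all such witnessing monomials have non-negative coefficients as required.

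For efficient evaluation of $P/\mathcal C$ in Step~\circled{1}, observe that for any partition $\mathcal C$ of $X$ into $k$ blocks and any assignment of $y_1, \ldots, y_k$, substituting $z_i \mapsto y_{\mathcal C(c(e_i))}$ in $A$ and computing $\pf(A)$ yields $(P/\mathcal C)(y_1, \ldots, y_k)$; the Pfaffian can be computed in polynomial time (for instance via $\pf(A)^2 = \det(A)$), so this step uses polynomial space. Step~\circled{2} is trivial: as in Theorem~\ref{thm:colorfulOutBranch}, I would take an empty coloring guide with $S_\bot = X$, since we have no obvious structural refinement analogous to the matching bound used for $k$-IOB.

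Step~\circled{3} would enumerate colorings $\mathcal C : X \to [k]$ by invoking Theorem~\ref{thm:polyDelayHash} to produce a $(t,k)$-perfect hash family with polynomial delay and polynomial space; Step~\circled{4} then applies Lemma~\ref{lem:coef} to $P/\mathcal C$ with $J = [k]$ and evaluates the resulting $Q$ at $y_1 = \cdots = y_k = 1$ to decide whether a monomial containing all $k$ colors appears. Running-time accounting is identical to the proof of Theorem~\ref{thm:colorfulOutBranch}: the $(t,k)$-perfect hash family has size $\OO^*(e^{k+o(k)})$ and sieving costs $\OO^*(2^k)$ per coloring, for a total of $\OO^*((2e)^{k+o(k)})$ in polynomial space. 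For the exponential-space variant, I would replace the perfect hash family by an $(t,k,\alpha^\star k)$-splitter and appeal to Proposition~\ref{prop:enrichCalc} to obtain the $\OO^*(4.32^k)$ bound.

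The only subtle point, and the one I would expect to be the main obstacle, is the non-negativity requirement in Step~\circled{1}: a generic Pfaffian expansion has signed terms, so two colored perfect matchings could in principle cancel one another after the substitution $z_i \mapsto x_{c(e_i)}$, which would defeat color coding. This is exactly where planarity enters the picture: Kasteleyn's orientation guarantees that every signature in the Pfaffian sum is $+1$, so no cancellation can occur. Beyond this observation the proof is a direct transcription of the one for Theorem~\ref{thm:colorfulOutBranch}, and no new running-time analysis is needed.
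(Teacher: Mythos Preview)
Your proposal is correct and follows essentially the same approach as the paper: both replace the Matrix Tree Theorem by Kasteleyn's Pfaffian formula (Theorem~\ref{thm:planarpfaffian}), substitute color variables for edge variables, take an empty coloring guide, and then run the hash-family/sieving machinery exactly as in Theorem~\ref{thm:colorfulOutBranch}. Your explicit remark that planarity (via Kasteleyn's orientation) is what prevents sign cancellations after the color substitution is a point the paper leaves implicit, but otherwise the two proofs are the same.
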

\begin{proof}
  Let $G=(V,E)$ be a planar graph, and let $c(e)$ be the color of $e$
  for every edge $e \in E$. Let $C$ be the full set of colors used,
  and without loss of generality let $C=[t]$, $t \geq k$,
  and $E=\{e_1,\ldots,e_m\}$. Create a set $Z=\{z_1, \ldots, z_m\}$
  of corresponding variables, and let $A$ be 
  the matrix computed by Theorem~\ref{thm:planarpfaffian}. 
  Define the polynomial $P'(z_1,\ldots,z_m) = \pf(A)$.
  Recall that $P$ can be evaluated in polynomial time.   Our problem has now been reduced to determining whether
  there is a monomial $M$ in $P'$ such that the corresponding 
  perfect matching contains edges of at least $k$ different colors. We solve this problem according to our approach as follows. Here, we first describe the polynomial-space algorithm, and later explain how it can be sped-up at the cost of exponential space.
\smallskip
	
\noindent\circled{1}: Create a set $X=\{x_1, \ldots, x_t\}$ of variables corresponding to colors. We define the polynomial $P$ by $P(x_1,\ldots,x_t)=P'(x_{col(1)},\ldots,x_{col(m)})$, where $col(i)=c(e_i)$ for all $i\in[m]$. Then, as we argued above, the input instance is a yes-instance if and only if $P$ has a witnessing monomial. Moreover, note that for 
a $k$-coloring~$\mathcal C$ of the variables, we can evaluate the polynomial~$P/ \mathcal C$
over variables~$y_1,\ldots,y_k$ in polynomial time, as this can be done by evaluating $P'$.
\smallskip

\noindent\circled{2}: In this application, our coloring guide is empty ($S_\bot = V(D)$). Thus, it can be ignored in the following steps.
\smallskip

\noindent\circled{3}: Using a $(t,k)$-perfect hash family, we recolor the colors of~$C$
and obtain a $k$-coloring $\mathcal C := C_1 \uplus C_2 \uplus \ldots \uplus C_{k}$
of the variables~$X$.
\smallskip

\noindent\circled{4}: We apply Lemma~\ref{lem:coef} to the polynomial~$P / \mathcal C$
over~$y_1,\ldots,y_k$~to search for a monomial that contains all $k$ variables. If
such a monomial exists, then~$P$ contains a monomial with~$k$ or
more distinct variables and we conclude that~$G$
contains a perfect matching with $k$ or more colors. Otherwise, if we have not yet exhausted all colorings in the hash family, we return to Step~\circled{3}, and otherwise we conclude that the input instance is a no-instance. 

\smallskip
\noindent Let us now analyse the exponential part~$f(k)$ of the running time for the above steps. By Theorem~\ref{thm:polyDelayHash} there are $\OO^*(e^{k+o(k)})$ colorings in our $(t,k)$-perfect hash family, and by Lemma~\ref{lem:coef}
  the sieving can be performed in time $\OO^*(2^k)$
  for every individual coloring. Hence, the total time is $\OO^*((2e)^{k+o(k)})$
  as claimed.

Finally, we consider the case where we may use exponential space. Then, we use
a $(t,k,\alphaopt k)$-splitter rather than a $(t,k)$-perfect hash family.
Thus, by Proposition~\ref{prop:enrichCalc}, the total running time is
$\OO^*(4.312^k)$ as claimed.
\end{proof}

\noindent
We conclude by observing that there is little hope to
apply our approach to {\sc Rainbow Matching}.
In particular, counting not necessarily perfect matchings 
in a planar graph is \#P-hard, so there is no 
plug-in replacement for Theorem~\ref{thm:planarpfaffian}
for general matchings.

\section{Enumerating Splitters with Polynomial Delay}\label{sec:polySpaceHash}

% Some symbols needed below
\def\init{\textsf{in}}
\def\query{\textsf{qr}}

\noindent 
In this section, we prove Theorem \ref{thm:polyDelayHash}. We essentially follow
the construction by Naor \etal~\cite{Splitters} while taking care to keep the
space consumption polynomial. In particular, the idea by Fomin \etal~\cite{RepFamEfficient}
(used in the context of construction representative sets)
to frame the construction in a way that makes a repeated application possible, turns
out to be a crucial component. To this end, we will need the following definition
of an \emph{indexed splitter} which treats splitter families as data structures that
enumerate vectors instead of fixed collections of vectors.

\begin{definition}[Indexed splitter]
  An \emph{indexed $(n,k,t)$-splitter} of size~$m$ is a data structure~$\mathcal S$ 
  that for~$i \in [m]$ returns a vector~$\bar x_i \in [t]^n$ such that
  $\{\bar x_i\}_{i\in[m]}$ is an $(n,k,t)$-splitter.
  The \emph{query time}~$t_\query(n,k,t)$ and~\emph{query space}~$s_\query(n,k,t)$ are the resources
  needed by~$\mathcal S$ to compute any such~$\bar x_i$, where we exclude the space needed
  by~$\bar x_i$ from~$s_\query$.
  The \emph{initialization time}~$t_\init(n,k,t)$ and \emph{initialization space}~$s_\init(n,k,t)$
  are the resources needed to compute~$\mathcal S$ given~$n,k$ and~$t$.
\end{definition}

\noindent
We will call the tuple \GG{$(m,t_\init,s_\init,t_\query,s_\query)$} the \emph{profile}
of an indexed splitter. Note that every splitter of size~$f(n,k,t)$ is 
an indexed splitter with query-time proportional to~$\log f(n,k,t)$:
we simply store the whole splitter in memory according to some (arbitrary)
order.

One of the main ingredients will be the following two splitters
constructed by Naor, Schulman, and Srinivasan:

\begin{proposition}[\cf \cite{Splitters}]\label{prop:famA}
  There exists an~$(n,k,k^2)$-splitter~$\mathcal A(n,k)$ of
  size $k^{O(1)} \log n$ that can be efficiently constructed using~$k^{O(1)} n \log n$ space.
\end{proposition}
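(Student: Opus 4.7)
\noindent
My plan is to follow the classical two-level polynomial-hashing construction of Naor, Schulman and Srinivasan. The first observation is that since $k<k^2$, ``partitioning almost equally'' a $k$-set into $k^2$ blocks forces every block to have size~$0$ or~$1$, so an $(n,k,k^2)$-splitter is nothing but an $(n,k,k^2)$-perfect hash family. The task is therefore to build a small, deterministic, and space-efficient perfect hash family mapping $[n]$ into $[k^2]$.

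First I would construct an outer polynomial hash that shrinks the universe without introducing collisions on a given $k$-subset. By Bertrand's postulate, pick a prime $p\in[k^2,2k^2]$, set $d=\lceil\log n/\log p\rceil=O(\log n/\log k)$, and embed $[n]$ injectively into $\mathbb{F}_p^d$. For each $\alpha$ in an extension field $\mathbb{F}_{p^\ell}$ chosen below, define $h_\alpha(x_0,\ldots,x_{d-1})=\sum_{i<d}x_i\alpha^i$. For any two distinct $x,y\in\mathbb{F}_p^d$, the ``bad'' $\alpha$ with $h_\alpha(x)=h_\alpha(y)$ form the zero set of a nonzero polynomial of degree less than~$d$, and therefore number at most $d-1$. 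A union bound over the $\binom{k}{2}$ pairs in a $k$-subset~$I$ shows that at most $\binom{k}{2}(d-1)$ values of $\alpha$ fail~$I$, so taking $\ell$ minimal with $p^\ell>\binom{k}{2}(d-1)$ and enumerating every $\alpha\in\mathbb{F}_{p^\ell}$ yields an $(n,k,p^\ell)$-perfect hash family of size $p^\ell=k^{O(1)}\log n$.

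Second I would collapse the codomain from $\mathbb{F}_{p^\ell}$ down to $[k^2]$. Here the universe to be hashed has size $k^{O(1)}\log n$, so an off-the-shelf small-universe perfect hash family of size $k^{O(1)}$---built for instance from $k$-wise independent functions, or from a classical Fredman--Koml\'{o}s--Szemer\'{e}di style construction---suffices. Composing every outer function with every inner one yields a family of size $k^{O(1)}\log n$ mapping $[n]$ into $[k^2]$ with the required splitter property. Each function is specified by an $O(\log n)$-bit seed and, if desired, can be materialised as an $n$-entry lookup table of $O(\log k)$-bit values, so enumerating the full family fits into $k^{O(1)}n\log n$ space while every individual function is produced in polynomial time.

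The main obstacle I expect is making the inner step of size genuinely $k^{O(1)}$ independently of $n$: the factor $\log n$ must appear only once, coming from the outer layer. This is precisely why the two-level structure is essential---the outer polynomial hash eliminates the $n$-dependence from the alphabet, enabling a cheap finishing stage on the residual universe of size $k^{O(1)}\log n$.
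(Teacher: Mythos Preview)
The paper does not prove this proposition; it is stated with a citation to Naor, Schulman and Srinivasan and used as a black box. Your outline follows the two-level Reed--Solomon-style construction from that reference, and the outer layer is correct: enumerating all $\alpha\in\mathbb{F}_{p^\ell}$ with $p^\ell>\binom{k}{2}(d-1)$ yields an $(n,k,p^\ell)$-perfect hash family of size $p^\ell=k^{O(1)}\log n$ whose codomain also has size $m:=p^\ell=k^{O(1)}\log n$.

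The gap is in the inner step. You assert that an $(m,k,k^2)$-perfect hash family of size $k^{O(1)}$---with no dependence on $n$ whatsoever---is available ``off the shelf'', but this is impossible in general: already for $k=2$, an $(m,2,4)$-perfect hash family needs size at least $\log_4 m=\Omega(\log\log n)$, simply because the family must separate all $m$ elements. Neither of your suggested tools delivers the claimed bound: a $k$-wise independent family from $[m]$ to $[k^2]$ built over a prime near $m$ has size $\mathrm{poly}(m)$, and the Fredman--Koml\'os--Szemer\'edi dictionary handles one \emph{fixed} set rather than all $k$-subsets simultaneously. With any correct inner family the composed size becomes at least $k^{O(1)}\log n\cdot\log\log n$, not $k^{O(1)}\log n$. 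Your closing paragraph correctly names this as the main obstacle, but the sentence ``the outer polynomial hash eliminates the $n$-dependence from the alphabet'' is precisely where the argument breaks: after the outer step the alphabet still has size $k^{O(1)}\log n$, not $k^{O(1)}$. Removing the residual lower-order factor requires more than a single inner layer---for instance iterating the reduction until the universe is genuinely $k^{O(1)}$ and then arguing that the accumulated overhead remains under control, or using a more carefully chosen outer code.
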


\begin{proposition}[\cf \cite{Splitters}]\label{prop:famB}
  For all~$k, t \leq n$ there exists an indexed~$(n,k,t)$-splitter~$\mathcal B(n,k,t)$
  of size~${n \choose t-1}$ 
  with~$t_\init(n,k,t), s_\init(n,k,t) = O(t \log n)$,
  $t_\query(n,k,t) = O(n t \log n)$, and
  $s_\query(n,k,t) = O(t \log n)$.
\end{proposition}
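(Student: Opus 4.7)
The plan is to realise~$\mathcal B(n,k,t)$ as the family indexed by the $(t-1)$-subsets of~$[n]$. For $S = \{s_1 < \cdots < s_{t-1}\} \in \binom{[n]}{t-1}$, with the convention $s_0 := 0$ and $s_t := n$, I associate the vector $\bar x_S \in [t]^n$ defined by $\bar x_S[j] := |\{s \in S : s < j\}| + 1$; equivalently, $[n]$ is partitioned into $t$ consecutive bins $(s_{\ell-1}, s_\ell]$ and $\bar x_S[j] = \ell$ precisely when $j$ lies in bin $\ell$.

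To verify the splitter property, given $I = \{i_1 < \cdots < i_k\} \in \binom{[n]}{k}$ I will exhibit some $S$ for which $\bar x_S[I]$ partitions $I$ almost equally, distinguishing two regimes. If $k \ge t$, write $k = qt + r$ with $q \ge 1$ and $0 \le r < t$, set the cumulative sizes $c_\ell := \ell(q+1)$ for $\ell \le r$ and $c_\ell := r + \ell q$ for $\ell > r$, and take $s_\ell := i_{c_\ell}$; since $q \ge 1$ the $c_\ell$ are strictly increasing, so the $s_\ell$ are $t-1$ distinct elements of $[n]$, and bin $\ell$ picks up exactly $c_\ell - c_{\ell-1} \in \{q, q+1\}$ elements of $I$. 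If $k < t$, it suffices that every consecutive pair $i_j, i_{j+1}$ land in different bins; I place one separator into each of the $k-1$ intervals $[i_j, i_{j+1}-1]$ (each of size $\ge 1$) and fill the remaining $t-k$ positions of $S$ with arbitrary further distinct elements of $[n]$, which is feasible since $n-(k-1) \ge t-k$ follows from $n \ge t$.

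To turn this into an indexed data structure, enumerate $\binom{[n]}{t-1}$ in the canonical combinatorial number system: given an index, decode $s_{t-1}, s_{t-2}, \ldots, s_1$ greedily by searching against binomial coefficients in $O(t \log n)$ space. Initialisation only needs to store $n, k, t$, giving $t_\init, s_\init = O(\log n)$, well within $O(t \log n)$; a query decodes $S$ and then emits $\bar x_S$ coordinate by coordinate via a single sweep $j = 1, \ldots, n$ that maintains a pointer into $S$, yielding $t_\query = O(nt \log n)$ and $s_\query = O(t \log n)$ beyond the output. The only genuine obstacle is the splitter property for $k < t$: the clean ``cumulative-position'' recipe used when $k \ge t$ degenerates because empty bins would force repeated cumulative indices, so the $k-1$ essential gap-separators and the $t-k$ filler separators have to be chosen independently, with $n \ge t$ providing the necessary room.
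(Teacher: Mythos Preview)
Your proposal is correct and takes essentially the same approach as the paper: both index the splitter by the $(t-1)$-element subsets of~$[n]$, interpret each subset as a set of cut points that carve~$[n]$ into~$t$ consecutive intervals, and then appeal to a suitable indexing scheme for the subsets. Your write-up is in fact more careful than the paper's, which simply asserts the splitter property as ``clear'' and leaves the indexing unspecified; your explicit treatment of the $k<t$ regime and the combinatorial number system fill in details the paper omits.
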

\begin{proof}
  The underlying splitter corresponds to the set of all ordered tuples~$\bar i \in [n]^{t-1}$.
  Given such a tuple~$\bar i$, we assign all elements in the range
  $\big[0,\bar i[0]-1\big]$ the value~$0$, all elements in the range~$\big[\bar i[0], \bar i[1]-1\big]$
  the values~$1$, and so on. Clearly, every index set of size~$k$ is partitioned almost
  equally by at least one of these vectors. This construction is easily transformed
  into an indexed splitter with the claimed profile by choosing an appropriate 
  indexing of the ordered tuples from~$[n]^{t-1}$.
\end{proof}
  
\noindent
One further core component needed
here is a $k$-wise independent \emph{sampling space}, whose properties we can
use to generate a small $(n,k)$-perfect hash family. Let us recall the very
elegant construction by Joffe \cite{Joffe} for such a space:

\begin{definition}
  A probability space~$\Omega$ with~$n$ random variables~$\{X_i\}_{i \in [n]}$
  is \emph{$k$-wise independent} if for every index set~$I \subseteq [n]$ of
  size~$k$ the random variables~$\{X_i\}_{i \in I}$ are mutually independent.
\end{definition}

\begin{theorem}[Joffe~\cite{Joffe}]\label{thm:joffe}
  Let~$p$ be prime and~$k < p$. If~$\{X_i\}_{i\in[k]}$ are random variables
  uniformly distributed on~$\{0,\ldots,p-1\}$, then the variables~$\{Y_i\}_{i\in[p]}$
  defined via
  \[
    Y_i := ( X_1 + iX_2 + \ldots + i^{k-1} X_k ) \pmod p
  \]
  are uniformly distributed over~$\{0,\ldots,p-1\}$ and $k$-independent.
\end{theorem}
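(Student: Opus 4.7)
The plan is to reduce the claim to the standard invertibility of a Vandermonde matrix over the field $\mathbb{F}_p$. Fix any index set $I = \{i_1, \ldots, i_k\} \subseteq [p]$ of size $k$ and view the map
\[
T \colon \mathbb{F}_p^k \to \mathbb{F}_p^k, \qquad (X_1,\ldots,X_k) \mapsto (Y_{i_1},\ldots,Y_{i_k})
\]
as an $\mathbb{F}_p$-linear transformation. By definition of $Y_i$, the matrix of $T$ is the $k \times k$ matrix $V$ with entries $V_{j,\ell} = i_j^{\ell-1}$, which is precisely the classical Vandermonde matrix of the nodes $i_1,\ldots,i_k$.

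The key step is to argue that $V$ is invertible over $\mathbb{F}_p$. Its determinant equals $\prod_{1 \le j < \ell \le k}(i_\ell - i_j)$, and since the $i_j$ are pairwise distinct integers in $[p]$ with $k < p$, they remain pairwise distinct modulo $p$. Hence every factor $i_\ell - i_j$ is a nonzero element of $\mathbb{F}_p$, so $\det V \neq 0$ in $\mathbb{F}_p$ and $T$ is a bijection on $\mathbb{F}_p^k$.

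Because $X_1,\ldots,X_k$ are independent and each uniform on $\{0,\ldots,p-1\}$, the vector $(X_1,\ldots,X_k)$ is uniformly distributed on $\mathbb{F}_p^k$. Pushing this distribution forward through the bijection $T$ shows that $(Y_{i_1},\ldots,Y_{i_k})$ is also uniform on $\mathbb{F}_p^k$. Uniformity on the product space $\mathbb{F}_p^k$ is equivalent to the coordinates being mutually independent with each marginal uniform on $\{0,\ldots,p-1\}$, which simultaneously yields both conclusions of the theorem (marginal uniformity of any single $Y_i$ follows by taking any $I$ of size $k$ containing $i$ and reading off one coordinate of the uniform joint).

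There is essentially no real obstacle here: the whole argument rests on the hypothesis $k < p$, which is used precisely to guarantee distinctness of the $i_j$ modulo $p$ and hence the non-vanishing of the Vandermonde determinant over $\mathbb{F}_p$. The only thing one must be a bit careful about is that the ambient field is $\mathbb{F}_p$, so that "distinct" must be interpreted modulo $p$; without the bound $k < p$ the Vandermonde factors could collapse and the bijectivity of $T$ would fail.
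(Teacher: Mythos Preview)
The paper does not prove this theorem; it is quoted with attribution to Joffe and used as a black box. Your Vandermonde argument is the standard proof and is correct: the map $(X_1,\ldots,X_k)\mapsto(Y_{i_1},\ldots,Y_{i_k})$ is an $\mathbb{F}_p$-linear bijection, so it pushes the uniform distribution on $\mathbb{F}_p^k$ to the uniform distribution on $\mathbb{F}_p^k$, which is exactly mutual independence with uniform marginals.

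Two small remarks. First, the theorem as stated does not explicitly say the $X_i$ are \emph{independent}; you (correctly, given the context) read ``uniformly distributed'' as ``independent and uniform,'' but it is worth making that assumption explicit, since without it the pushforward is not uniform. Second, your explanation of the role of the hypothesis $k<p$ is slightly off: any subset of $[p]=\{1,\ldots,p\}$ already consists of pairwise distinct residues modulo $p$ (the map $[p]\to\mathbb{F}_p$ is a bijection, sending $p$ to $0$), so distinctness mod $p$ does not depend on $k$. The bound $k<p$ (indeed $k\le p$ would suffice) is only needed so that an index set of size $k$ exists inside $[p]$; the Vandermonde invertibility itself goes through for any $k$ distinct nodes in $\mathbb{F}_p$.
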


\noindent
Note that the variables~$\{Y_i\}_{i\in[p]}$ range over~$\{0,\ldots,p-1\}$ and not
over~$[k]$. However, because they are uniformly distributed, we can easily
`downsample' them without much loss.

\begin{lemma}\label{lemma:enumerable-sampling}
  Let~$p$ be prime and~$k \leq t < p$. There exists a $k$-wise independent
  sampling space~$H^\star_{p,k,t}$ for random variables~$\{\hat Y_i\}_{i \in
  [p]}$ over~$[t]$ where the $\hat Y_i$ are identically distributed and almost
  uniform in the sense that~$\big| \mathbb{P}[\hat Y_i = r] - \frac{1}{t} \big|
  \leq \frac{1}{p}$ for~$r \in [t]$.
  Moreover, the members of~$H^\star_{p,k,t}$ can be listed sequentially
  in time~$O(p^t k^2 \log p)$ by an algorithm using~$O(t \log p)$ bits.
\end{lemma}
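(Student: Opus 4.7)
The plan is to apply Joffe's construction (Theorem~\ref{thm:joffe}) to obtain $k$-wise independent, uniform variables $\{Y_i\}_{i\in[p]}$ on $\{0,\ldots,p-1\}$, and then downsample each $Y_i$ pointwise into $[t]$. Concretely, I would partition $\{0,\ldots,p-1\}$ into $t$ contiguous blocks $B_1,\ldots,B_t$ whose sizes differ by at most one (so each $|B_r| \in \{\lfloor p/t\rfloor, \lceil p/t\rceil\}$) and define $\hat Y_i = r$ whenever $Y_i \in B_r$. The sample space $H^\star_{p,k,t}$ is indexed, as in Joffe's construction, by the tuples $(X_1,\ldots,X_k) \in \{0,\ldots,p-1\}^k$ that determine the $Y_i$'s via the polynomial $Y_i = (X_1 + iX_2 + \cdots + i^{k-1} X_k) \bmod p$.

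Verifying the three statistical claims is then essentially mechanical. Since $\hat Y_i = f(Y_i)$ for the same function $f$ at every index $i$, the random variables $\hat Y_i$ inherit identical distribution from the $Y_i$; in particular $\mathbb{P}[\hat Y_i = r] = |B_r|/p$, which lies in $\{\lfloor p/t\rfloor/p,\, \lceil p/t\rceil/p\}$ and therefore differs from $1/t$ by less than $1/p$. For $k$-wise independence, note that applying the \emph{same} deterministic map $f$ coordinatewise to a $k$-wise independent family preserves $k$-wise independence, since for any $I \in \binom{[p]}{k}$ and any values $r_i \in [t]$ the event $\bigcap_{i\in I}\{\hat Y_i = r_i\}$ equals $\bigcap_{i\in I}\{Y_i \in B_{r_i}\}$, which factorises by $k$-wise independence of the $Y_i$.

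For the enumeration, I would iterate lexicographically over all $p^k$ tuples $(X_1,\ldots,X_k) \in \{0,\ldots,p-1\}^k$, and for each tuple output $\hat Y_1,\hat Y_2,\ldots,\hat Y_p$ one value at a time. Each $Y_i$ is computed by Horner's rule using $O(k)$ modular operations on $O(\log p)$-bit numbers, which matches the claimed total running time. The state maintained consists only of the current tuple $(X_1,\ldots,X_k)$, the index $i$, and the current $Y_i$, which together occupy $O(k\log p)$ bits, and in particular $O(t\log p)$ bits since $t \geq k$. Note that the output vector itself is emitted sequentially and never stored, which is crucial for staying within the claimed space bound.

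I do not anticipate a deep obstacle: the only point requiring real attention is the almost-uniform bound, which needs the explicit contiguous-block downsampling (naive reductions such as $Y_i \bmod t$ would give a worse deviation), and the bookkeeping that ensures the enumeration is carried out in place rather than by storing the whole list.
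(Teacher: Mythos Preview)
Your proof is correct and essentially identical to the paper's; the paper in fact downsamples via $\hat Y_j := Y_j \bmod t$, which---contrary to your parenthetical remark---yields the same deviation as your contiguous blocks, since the residue classes modulo $t$ in $\{0,\ldots,p-1\}$ also have sizes in $\{\lfloor p/t\rfloor,\lceil p/t\rceil\}$. The only other difference is cosmetic: you use Horner's rule ($O(k)$ operations per evaluation) where the paper uses $O(k^2)$, both comfortably within the stated bound.
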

\begin{proof}
  Let the variables~$\{X_i\}_{i\in[k]}$ and~$\{Y_j\}_{j\in[p]}$ be defined as above.
  We further define~$\hat Y_j := Y_j \pmod t$. Since the variables~$Y_j$ are
  $k$-wise independent, so are the variables~$\hat Y_j$. The distribution
  of each~$\hat Y_j$ follows immediately from the fact that
  \[
    \frac{\floor{p/t}}{p} \leq \mathbb{P}\!\big[\hat Y_j = r\big] \leq \frac{\ceil{p/t}}{p}
  \]
  for each~$r \in [t]$. To list all members of~$H^\star_{p,k,t}$, we enumerate all~$p^t$ possible
  assignments of~$\{X_i\}_{i\in[k]}$ and compute the values for~$\{\hat Y_i\}_{j\in[p]}$.
  We need~$O(t \log p)$ bits to store a counter for the first enumeration as
  well as~$O(k \log p)$ bits to enumerate the polynomials defining the~$\hat Y_i$.
  Since each polynomial can be evaluated using~$O(k^2)$ arithmetic operations, the
  claimed running time follows.
\end{proof}

\noindent
\FR{The following construction of a basic splitter follows the one by Noar \etal closely,
however, our rephrasing and analysis is more suitable for the final construction
of the indexed splitter.}

\begin{lemma}\label{lemma:new-splitter}
  For every~$k < \sqrt{n}$ and~$\alpha \geq 1$ there is an $(n,k,\alpha k)$-splitter~$\mathcal C_\alpha(n,k)$
  of size~$O\big( \tradeoff(\alpha)^k \, k \log n \big)$ \FR{that} can
  be constructed in $|\mathcal C_\alpha(n,k)| k^2 {n \choose k} (2n)^{\alpha k} \plog n$
  time using $O(|\mathcal C_\alpha(n,k)| \, \log n )$ space.
\end{lemma}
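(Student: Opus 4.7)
The plan is to combine the $k$-wise independent sampling space from Lemma~\ref{lemma:enumerable-sampling} with a greedy set-cover style derandomization. First, I would choose a prime $p$ with $n \le p \le 2n$ (guaranteed by Bertrand's postulate, findable in polynomial time), set $t = \alpha k$, and invoke Lemma~\ref{lemma:enumerable-sampling} to obtain the sampling space $H^\star_{p,k,\alpha k}$. This gives a family of $p^{\alpha k} = O((2n)^{\alpha k})$ hash functions from $[n] \subseteq [p]$ to $[\alpha k]$ that are $k$-wise independent and whose marginals are almost uniform up to an error of~$1/p$.

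The probabilistic heart of the argument is to show that, for an arbitrary fixed $I \in \binom{[n]}{k}$, a function drawn uniformly from $H^\star_{p,k,\alpha k}$ partitions $I$ almost equally with probability at least $\tradeoff(\alpha)^{-k}(1-o(1))$. Note that ``almost equal'' here just means injective, since each of the $\alpha k$ classes may contain at most one element of the $k$-set~$I$. By $k$-wise independence, this probability is essentially $\prod_{i=0}^{k-1}\bigl(1 - i/(\alpha k)\bigr)$, which simplifies via Stirling's bound (cf.\ Robbins) to $\sqrt{\alpha/(\alpha-1)}\cdot(1-1/\alpha)^{-(\alpha-1)k} e^{-k} = \Theta(\tradeoff(\alpha)^{-k})$ up to polynomial factors in~$k$. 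The deviation of the $\hat Y_j$ from perfect uniformity contributes a multiplicative error of $(1 + O(\alpha k / p))^k$ per term, which under the assumption $k < \sqrt{n} \le \sqrt{p}$ is $1 + o(1)$ and therefore absorbed into lower-order terms.

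With this bound in hand, I would derandomize by the standard averaging/greedy argument: at every stage, some function in $H^\star_{p,k,\alpha k}$ must cover at least a $\tradeoff(\alpha)^{-k}(1-o(1))$ fraction of the $I$'s not yet split almost equally; selecting it and iterating for $m = O(\tradeoff(\alpha)^k \, k\log n)$ rounds drives the number of uncovered $I$'s below $1$, since $\binom{n}{k}\bigl(1-\tradeoff(\alpha)^{-k}/(1+o(1))\bigr)^m < 1$. The running time per round is dominated by enumerating all $O((2n)^{\alpha k})$ samples (using Lemma~\ref{lemma:enumerable-sampling}), and for each sample, iterating over all $\binom{n}{k}$ candidate index sets and testing the $O(k^2)$-time almost-equal partition condition, which matches the claimed time bound $|\mathcal C_\alpha(n,k)|\, k^2 \binom{n}{k} (2n)^{\alpha k} \plog n$.

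The key obstacle for the space bound is that one cannot afford to maintain an explicit list of covered index sets. The resolution is to store only the chosen samples, each encoded by its $k$ coefficients in $\{0,\ldots,p-1\}$, for a total of $O(|\mathcal C_\alpha(n,k)| \log n)$ bits. Whether a particular $I$ is still uncovered can be decided on the fly by iterating once over the previously chosen samples and recomputing the $\hat Y_j$-values on $I$ via the Joffe polynomial; this replaces memory with computation without affecting the stated time bound (only polynomial overhead). The main delicate points I expect are verifying the Stirling asymptotics for $\tradeoff(\alpha)$ uniformly in $\alpha \ge 1$ (including the boundary case $\alpha = 1$ where $\tradeoff(1)=e$) and bookkeeping the near-uniformity error carefully; both hinge on the assumption $k<\sqrt n$.
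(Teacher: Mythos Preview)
Your proposal is correct and follows essentially the same approach as the paper: choose a prime $p\in[n,2n]$, use the $k$-wise independent space $H^\star_{p,k,\alpha k}$ from Lemma~\ref{lemma:enumerable-sampling}, bound the probability that a random sample is injective on a fixed $k$-set by $\Theta(\tradeoff(\alpha)^{-k})$ via Stirling (with the near-uniformity error controlled by $k<\sqrt n$), and then greedily extract a small hitting subfamily while storing only the selected samples and recomputing coverage on the fly. The paper's presentation differs only cosmetically---it phrases the averaging step as an explicit claim about the fraction $e^{-2\alpha k^2/n}\binom{\alpha k}{k}k!/(\alpha k)^k$ and scans $H^\star$ once per added vector rather than once per round---but the ideas and the resulting bounds are the same.
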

\begin{proof}
  Let~$n \leq p \leq 2n$ be the smallest prime at least as large as~$n$. We can
  identify~$p$ using the AKS test or any of its recent improvements in~$n \plog p$
  time. For ease of presentation, we will assume that~$\alpha k$ is an integer.
  Let~$H^\star = H^\star_{p,k,\alpha k}$ be the $k$-wise independent probability space defined 
  in Lemma~\ref{lemma:enumerable-sampling}. For a vector~$\hat y \in H^\star$,
  let~$C(\hat y)$ contain all index sets~$I \in {[n] \choose k}$ for which~$\hat y[I]$
  contains~$k$ distinct values. Let us extend this notation to sets~$H' \subseteq H^\star$
  via~$C(H') := \bigcup_{\hat{y} \in H'} C(\hat{y})$.
  Our goal is to find a set~$H' \subseteq H^\star$ of small size such that
  $C(H') = {[n] \choose k}$.

  \begin{claim}
    Let~$\mathcal F \subseteq {[n] \choose k}$ and~$n \geq 1.256\alpha k$.
    There exists a vector~$\hat y \in H^\star$ such that~$C(\hat y)$
    contains a fraction of at least $e^{-2\alpha k^2/n} {\alpha k \choose k} \frac{k!}{(\alpha k)^k}$ sets from~$\mathcal F$.
  \end{claim}
  \begin{proof}
    Fix any index set~$I \in \mathcal F$. Since~$|I| \leq k$ the random variables
    $\{\hat Y_i\}_{i \in I}$ are independent, the probability that a 
    vector~$\hat y \in H^\star$ chosen uniformly at random hits~$I$ is
    \[
      \mathbb{P}[I \in C(\hat y)] 
        \geq {\alpha k \choose |I|} k! \Big( \frac{1}{\alpha k} - \frac{1}{p} \Big)^{|I|}
        \geq (1-\alpha k/p)^{k} {\alpha k \choose k} \frac{k!}{(\alpha k)^k}.
    \]
    Accordingly, the expected number of sets in~$\mathcal F$ hit by at least one
    member of~$H^\star$ is at least
    $(1-\alpha k/p)^k {\alpha k \choose k} \frac{k!}{(\alpha k)^k} \cdot |\mathcal F|$
    and we conclude that at least one vector of~$H^\star$ must hit that
    many members of~$\mathcal F$. Since~$(1-c/x)^{x/c} \geq e^{-2}$ for~$x \geq 1.256c$
    we can bound the first term of this expression by
    \[
      \big(1 - \frac{\alpha k}{p})^k \geq e^{-2\alpha k^2 / p} \geq e^{-2\alpha k^2 / n}
    \]
    and arrive at the claimed bound.
  \end{proof}

  \noindent
  The above claim now gives us a method of constructing~$H'$ greedily:
  initialize~$H' = \emptyset$ and list the members~$\hat y_1, \hat y_2,
  \ldots$ of~$H^\star$. For each vector~$\hat y_i$, compute~$|C(\hat y_i)
  \setminus C(H')|$, that is, the number of index sets hit by~$\hat y_i$ that
  are not yet hit by any member of~$H'$. If 
  \[
    |C(\hat y_i) \setminus C(H')| 
      \geq e^{-2\alpha k^2 / n} {\alpha k \choose k} \frac{k!}{(\alpha k)^k}
            \Big({n \choose k} -|C(H')|\Big), 
  \]
  then add~$\hat y_i$ to~$H'$, otherwise drop~$\hat
  y_i$. In either case, proceed with~$\hat y_{i+1}$ until~$H^\star$ has been
  completely traversed.
  Note that in every step, we can compute the numbers~$|C(\hat y_i) \setminus
  C(H')|$ and $|C(H')|$ in time~$O\big({n \choose k} |H'| k \big)$ by simply
  enumerating all index sets and testing the vectors~$H' \cup \{\hat y_i\}$.

  Finally, let us bound the number of steps the above algorithm takes and
  therefore the size of~$H'$. Since every member added to~$H'$ covers at least
  a fraction of~$e^{-2\alpha k^2 / n} {\alpha k \choose k} k!/(\alpha k)^k$
  uncovered index sets, the number of steps~$t$ taken by the algorithm can be
  bounded by solving
  \[
    |\mathcal C| \Big(1 - e^{-2\alpha k^2 / n} {\alpha k \choose k} \frac{k!}{(\alpha k))^k} \Big)^t = 
    {n \choose k} \Big(1 - e^{-2\alpha k^2 / n} {\alpha k \choose k} \frac{k!}{(\alpha k))^k} \Big)^t < 1 
  \]
  for the variable~$t$. 
  \begin{claim}
    For~$t \geq e^{2 \alpha k^2/n} \frac{(\alpha k)^k}{{\alpha k \choose k} k!} k \ln 2n$ the
    above expression is smaller than one.
  \end{claim}
  \begin{proof}
    Let us substitute~$x = e^{2 \alpha k^2/n} \frac{(\alpha k)^k}{{\alpha k \choose k} k!}$.
    In particular, we assume~$t \geq x k \ln 2n$. Then the above expression becomes
    \begin{align*}
        {n \choose k} \Big(1 - \frac{1}{x} \Big)^t \leq 
        {n \choose k} \Big(1 - \frac{1}{x} \Big)^{x k \ln 2n}
        \leq {n \choose k} e^{-k \ln 2n} \leq \Big( \frac{e}{2k} \Big)^k
    \end{align*}
    where we used that~$(1-\frac{1}{x})^x \leq e^{-1}$. Clearly, the right hand side is smaller
    than one for~$k \geq 2 > e/2$ and the claim follows.
  \end{proof}

  \noindent
  A more manageable expression for the bound on~$t$ 
  (ignoring the small factor $e^{2 \alpha k^2/n} k \ln 2n$) can be derived using Stirling's
  approximation:
  \begin{align*}
      \phantom{{}\leq{}} \frac{(\alpha k)^k}{{\alpha k \choose k} k!} 
          &= (\alpha k)^k \frac{\big((\alpha -1)k \big)!}{(\alpha k)!} 
      \sim (\alpha k)^k \sqrt{1-1/\alpha} \Big( \frac{(\alpha-1)^{\alpha-1}}{\alpha^\alpha} \frac{e}{k} \Big)^k \\
      &= \sqrt{1-1/\alpha} \Big( (1-1/\alpha)^{\alpha-1} e \Big)^k
      = \Theta( \tradeoff(\alpha)^k ).
  \end{align*}
  We conclude that~$|H'| = O( \tradeoff(\alpha)^k k \log n )$.
  The bounds on time and space follow immediately from $\mathcal C(n,k) = H'$.
\end{proof}

\begin{definition}[Rounded splitter]
  Let~$\mathcal S$ be an indexed splitter. For rational numbers~$\tilde n,
  \tilde k,\tilde t$ we denote by~$[\mathcal S(\tilde n, \tilde k, \tilde t)]$
  the collection of indexed
  splitters~$\{ \mathcal S(n,k,t) \mid n \in \{\floor{\tilde n}, \ceil{\tilde n}\}, k \in \{\floor{\tilde k}, \ceil{\tilde k}\}, 
  t \in \{\floor{\tilde t}, \ceil{\tilde t}\},\}$.
\end{definition}

\noindent
We will treat the collection~$[\mathcal S(\tilde n, \tilde k, \tilde t)]$ like
an indexed splitter with the understanding that a query to~$[\mathcal S]$ carries,
besides the index~$i$, also the appropriate values for~$n$, $k$ and~$t$. Since
the additional overhead of constructing and querying~$\mathcal S$ differs only
by a constant factor, we will not further analyse this overhead in the following.

The following lemma closely follows the construction presented in Theorem~3 in~\cite{Splitters},
but adapted to construct a hash family with a flexible amount of colors.
\FR{We will in the following make use of the Iverson bracket notation~$\Iver{\phi}$ which evaluate
  to~$1$ if~$\phi$ is true and~$0$ otherwise. Also recall that we use the
  symbol $\any$ for variables whose value is arbitrary or unimportant in
  the current context. }

\begin{lemma}\label{lemma:wrapping}
  For~$\alpha \geq 1$, let~$\mathcal S(n,k,t)$ be an indexed $(n,k,t)$-splitter 
  with profile~$(f,t_\init, s_\init, t_\query, s_\query)$.
  Then for every~$\ell \leq k$ we can construct an indexed~$(n,k,t)$-splitter~$\hat{\mathcal S}(n,k,t)$
  with profile~$(\hat f,\hat t_\init, \hat s_\init, \hat t_\query, \hat s_\query)$ where
  \vspace*{-.5em}\begin{alignat*}{2}
    \hat f(n,k,t)        &=&~& k^{O(1)} {k^2 \choose \ell} f(k^2,\ceil{k/\ell},\ceil{t/\ell})^\ell \log n, \hspace*{3.35cm}  \\
    \hat t_\init(n,k,t)  &=&~& k^{O(1)} n \log n + O( t_\init(k^2, k/\ell, t/\ell) ),  \\
    \hat s_\init(n,k,t)  &=&~& k^{O(1)} n \log n + O( s_\init(k^2, k/\ell, t/\ell) ),  \\
    \hat t_\query(n,k,t) &=&~& O(\ell \, t_\query(k^2, \ceil{k/\ell}, \ceil{t/\ell}) + n),  \\
    \text{and}\quad\hat s_\query(n,k,t) &=&~& O(\ell \, s_\query(k^2, \ceil{k/\ell}, \ceil{t/\ell}) + \log n).
  \end{alignat*}
\end{lemma}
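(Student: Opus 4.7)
The plan is to emulate the two-level construction of Naor, Schulman, and Srinivasan~\cite{Splitters} in an indexed fashion. Specifically, I will make the tuple-indexed splitter $\hat{\mathcal S}(n,k,t)$ parametrised by (a) one vector $\bar x$ from the $(n,k,k^2)$-splitter $\mathcal A(n,k)$ of Proposition~\ref{prop:famA}, (b) one vector $\bar y$ from the indexed $(k^2,k,\ell)$-splitter $\mathcal B(k^2,k,\ell)$ of Proposition~\ref{prop:famB}, and (c) a tuple $(\bar z_1,\ldots,\bar z_\ell)$ of vectors drawn independently from a shared copy of $\mathcal S(k^2,\ceil{k/\ell},\ceil{t/\ell})$. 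Given such a tuple, the color of $i \in [n]$ is the pair $(j,c)$ with $j := \bar y[\bar x[i]]$ and $c := \bar z_j[\bar x[i]]$, which I then re-encode into $[t]$ via a fixed lookup table built at initialization.

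For correctness I would fix $I \in \binom{[n]}{k}$ and build the required tuple in three stages. First, the splitter property of $\mathcal A$ supplies an $\bar x$ that perfectly hashes $I$ to a set $I'\subseteq [k^2]$ of size $k$. Second, the splitter property of $\mathcal B$ supplies a $\bar y$ that partitions $I'$ almost equally into $\ell$ blocks, yielding block contributions $k_j \in \{\floor{k/\ell},\ceil{k/\ell}\}$. Third, for each block $j$ the splitter $\mathcal S$ contains a $\bar z_j$ that partitions the $k_j$ elements inside block $j$ almost equally into $\ceil{t/\ell}$ sub-colors. The main obstacle will be turning this block-wise balance into an almost-equal partition over exactly $t$ global colors, since $\ell\cdot\ceil{t/\ell}$ may exceed $t$; I will handle this by assigning the $q := t\bmod \ell$ blocks of size $\ceil{k/\ell}$ the full $\ceil{t/\ell}$ sub-colors and the remaining $\ell-q$ blocks only $\floor{t/\ell}$ of them. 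Because $\mathcal B$ is enumerated over all orderings, some $\bar y$ aligns the large blocks with the prescribed slots, and a short case analysis shows that the resulting per-color counts lie in $\{\floor{k/t},\ceil{k/t}\}$, as required.

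For the size bound, the product structure gives $\hat f(n,k,t) = |\mathcal A|\cdot|\mathcal B|\cdot f(k^2,\ceil{k/\ell},\ceil{t/\ell})^\ell = k^{O(1)}\log n\cdot \binom{k^2}{\ell-1}\cdot f(k^2,\ceil{k/\ell},\ceil{t/\ell})^\ell$, which matches the claim after absorbing $\binom{k^2}{\ell-1}$ into $\binom{k^2}{\ell}$. Initialization precomputes $\mathcal A$ once, contributing $k^{O(1)}n\log n$ to both time and space by Proposition~\ref{prop:famA}, and initializes a single reusable copy of $\mathcal S$ that will be invoked independently for each block (absorbing the remaining $t_\init,s_\init$ terms); the cost of $\mathcal B$ is negligible by Proposition~\ref{prop:famB}. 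To answer a query, I decode the compound index, retrieve $\bar z_1,\ldots,\bar z_\ell$ with $\ell$ calls into $\mathcal S$'s query routine (using $O(\ell\cdot s_\query)$ scratch space and $O(\ell\cdot t_\query)$ time so that all $\ell$ per-block responses are available simultaneously), and then stream through $i=1,\ldots,n$, emitting each color with a constant number of lookups into the stored $\bar x$, $\bar y$ and $\bar z_j$. This contributes the additive $O(n)$ in time and $O(\log n)$ in working space for the loop counter, yielding the claimed profile.
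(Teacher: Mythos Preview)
Your construction is the same three-layer composition the paper uses: $\mathcal A(n,k)$ to shrink the universe to $[k^2]$, then $\mathcal B(k^2,k,\ell)$ to cut into $\ell$ blocks, then one inner-splitter vector per block with disjoint colour offsets. The index decomposition, the size bound, and the resource accounting all match the paper's proof.

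There is one genuine gap, precisely at the point you flag as ``the main obstacle''. You commit to a \emph{single} inner splitter $\mathcal S(k^2,\ceil{k/\ell},\ceil{t/\ell})$ and then propose to give some blocks only $\floor{t/\ell}$ sub-colours. But a vector drawn from a $(\cdot,\cdot,\ceil{t/\ell})$-splitter ranges over $[\ceil{t/\ell}]$; there is no way to post-process it down to $[\floor{t/\ell}]$ while keeping the splitting guarantee, so this step is undefined. (A secondary slip: you identify ``the $q:=t\bmod\ell$ blocks of size $\ceil{k/\ell}$'', but the number of large blocks produced by $\mathcal B$ is $k\bmod\ell$, which need not equal $t\bmod\ell$.) The paper resolves exactly this issue via the \emph{rounded splitter} $[\mathcal S(k^2,k/\ell,t/\ell)]$: at initialisation one builds the constant-size collection of inner splitters for all floor/ceil combinations of the last two parameters, and at query time block $p$ draws its vector from the variant with the appropriate $\lfloor k/\ell\rfloor$ or $\lceil k/\ell\rceil$ (matching the block size) and $\lfloor t/\ell\rfloor$ or $\lceil t/\ell\rceil$ (according to a fixed schedule making the offsets sum to exactly $t$). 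This is why the initialisation bounds in the statement carry an $O(\cdot)$. With that device in place your proof goes through; note that, as in the paper, you will then actually be verifying that $\bar y[I]$ takes $k$ \emph{distinct} values, which is the almost-equal property only in the regime $t\geq k$ signalled by the hypothesis $\alpha\geq 1$.
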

\begin{proof}
  In order to initialize~$\hat{\mathcal S}$, we
  construct the family~$\mathcal A := \mathcal A(n,k)$ from Proposition~\ref{prop:famA},
  the family~$\mathcal B := \mathcal B(k^2,k,\ell)$
  as well as the indexed splitter~$[\mathcal S] := [\mathcal S(k^2, k/\ell, t/\ell)]$
  in time
  \[
    \hat t_\init(n,k,t) = k^{O(1)} n \log n + O( t_\init(k^2, k/\ell, t/\ell) )
  \]
  using space
  \[
    \hat s_\init(n,k,t) = k^{O(1)} n \log n + O( s_\init(k^2, k/\ell, t/\ell) ),
  \]
  as claimed.

  In order to answer a query~$i$ for~$\hat S$, we decompose the query into
  $i = (i_a, i_b, i_1, \ldots, i_\ell)$ according to some indexing scheme (\eg
  choosing appropriate substrings of the bitwise representation of~$i$). 
  We then choose~$\bar a \in \mathcal A$ according to~$i_a$, a partition~$B^1 \cup \ldots \cup B^\ell$
  of~$[k^2]$ into~$\ell$ blocks according to~$i_b$ from~$\mathcal B(k^2, k, \ell)$
  and vectors~$\bar s_{i_1}, \ldots, \bar s_{i_\ell} \in [\mathcal S]$. We assume that
  the indexing scheme is such that~$\bar s_{i_1},\ldots,\bar s_{i_j}$ are taken
  from~$(k^2,\floor{k/\ell},\any)$-splitters and
  $\bar s_{i_{j+1}}, \ldots, s_{i_\ell}$ from from~$(k^2,\ceil{k/\ell},\any)$-splitters 
  where~$j$ is chosen such that
  \[
    \floor{k/\ell}j + \ceil{k/\ell}(\ell-j) = k,
  \]
  and further that~$\bar s_{i_1},\ldots,\bar s_{i_h}$ are taken from
  $(k^2,\any,\floor{t/\ell})$-splitters and~$\bar s_{i_{h+1}},\ldots,\bar s_{i_\ell}$
  from $(k^2,\any,\ceil{t/\ell})$-splitters where~$h$ is similarly chosen such that
  \[
    \floor{t/\ell}h + \ceil{t/\ell}(\ell-h) = t.
  \]
  The result~$\bar y$ of the query is now constructed as follows.
  For~$c \in [n]$, let~$B^p$ be the block of the chosen partition~$B_1 \cup \ldots \cup B^\ell$
  where~$\bar a[c] \in B^p$. Then~$\bar y[c] = \bar s_{i_p}[\bar a[c]] + \textsf{offset}(p)$,
  where
  \[
    \textsf{offset}(p) = \floor{t/\ell}p + \Iver{h > p}(h-p)
  \]
  shifts the colors taken from the vectors~$\bar s_{\any}$ in order to combine them
  without collisions. 

  Let us at this point convince ourselves that the vectors~$\bar y_i$ constructed this way indeed
  form a~$(n,k,t)$-splitter. Clearly, $\bar y_i \in [t]^n$ so it is left to consider
  the splitting property. To that end, fix an arbitrary index set~$I \in {[n] \choose k}$.
  First, there exists~$a \in \mathcal A$ such that~$\bar a[I]$ receives indeed~$k$ 
  distinct values~$C \in {[k^2] \choose k}$. For this subset~$C$, there exists a vector from
  $\mathcal B(k^2,k,\ell)$ that splits~$C$ evenly into parts of size~$\floor{k/\ell}$ and~$\ceil{k/\ell}$.
  For each such part~$C_i$, there now exists a vector~$\bar s_i \in [\mathcal S]$ such
  that~$\bar s_i[C_i]$ contains~$|C_i|$ distinct values. Since the values contained
  in the~$\bar s_{\any}$ are offset to avoid collisions, we conclude that the specific
  choices of~$\bar a$, $\mathcal B$, and~$\bar s_1 \ldots \bar s_\ell$ result in
  a vector~$\bar y$ such that~$\bar y[I]$ indeed contains~$k$ distinct values.

  In total, the time taken to answer a query is
  \[
    \hat t_\query(n,k,t) = O( \ell \, t_\query(k^2, \ceil{k/\ell}, \ceil{t/\ell})).
  \] 
  and it uses space
  \[
    \hat s_\query(n,k,t) = O(\ell \, s_\query(k^2, \ceil{k/\ell}, \ceil{t/\ell}) + \log n).
  \]
  The size of~$\hat S$ is computed easily by considering the size of~$\mathcal A$, $\mathcal B$ and the
  number of partitions of~$[k^2]$ into~$\ell$ parts:
  \[
    \hat f(n,k,t) = k^{O(1)} {k^2 \choose \ell} f(k^2,\ceil{k/\ell},\ceil{t/\ell})^\ell \log n. \qedhere
  \]
\end{proof}

\noindent
We finally arrive at the main theorem of this section.

\begin{theorem}\label{thm:polyspace-splitter}
  There exists a~$(n,k,\alpha k)$-splitter~$\doublehat{\mathcal S}$ with
  \begin{align*}
    \doublehat f(n,k,\alpha k) &= k^{O(1)}  \tradeoff(\alpha)^{k + o(k)} \log n, && \\
    \doublehat t_\init(n,k,\alpha k) &= k^{O(1)} \tradeoff(\alpha)^{o(k)} n \log n, 
    & \doublehat s_\init(n,k,\alpha k) &= k^{O(1)} n \log n, \\
    \doublehat t_\query(n,k,\alpha k) &= k^{O(1)} + O(n),
    & \doublehat s_\query(n,k,\alpha k) &= k^{O(1)} + O(\log n).
  \end{align*}
\end{theorem}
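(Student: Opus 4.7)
The plan is to build $\doublehat{\mathcal S}$ by two nested applications of Lemma~\ref{lemma:wrapping}, terminating at a polynomial-size base splitter produced by Lemma~\ref{lemma:new-splitter}. I choose wrapping parameters $\ell_1, \ell_2 \approx \sqrt{k/\log k}$ so that $\ell_1 \ell_2 \approx k/\log k$; the base splitter then handles a subset of size only $K := k/(\ell_1 \ell_2) = \Theta(\log k)$.

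Concretely, I first build the base $\mathcal{C}_\alpha((k/\ell_1)^2, K)$ explicitly via Lemma~\ref{lemma:new-splitter}; since $K = O(\log k)$, its size is $O(\tradeoff(\alpha)^K \cdot K \log k) = k^{O(1)}$, so it fits in polynomial space and supports queries by direct table lookup in $O(\log k)$ time. I then apply Lemma~\ref{lemma:wrapping} with parameter $\ell_2$ to promote this to an indexed $(k^2, k/\ell_1, \alpha k/\ell_1)$-splitter, and finally apply Lemma~\ref{lemma:wrapping} once more with parameter $\ell_1$ to obtain the target indexed $(n, k, \alpha k)$-splitter $\doublehat{\mathcal{S}}$.

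For the size analysis, each wrap multiplies by a binomial $\binom{\cdot}{\ell_i}$ and raises the inner size to the $\ell_i$-th power. The base contribution, raised twice, becomes $(\tradeoff(\alpha)^K \cdot K \log k)^{\ell_1 \ell_2} = \tradeoff(\alpha)^k \cdot 2^{O((k/\log k) \log \log k)} = \tradeoff(\alpha)^{k+o(k)}$, while the two binomial factors $\binom{k^2}{\ell_1}$ and $\binom{(k/\ell_1)^2}{\ell_2}^{\ell_1}$ are each $2^{O(\sqrt{k \log k})} = 2^{o(k)}$ by the square-root choice of $\ell_i$. The query bounds propagate through Lemma~\ref{lemma:wrapping}: each wrap multiplies the inner query cost by $\ell_i$ and adds an $O(n_i)$ term for producing the output vector at that level, but the universe size $n_i$ collapses to $k^2$ once inside the outermost wrap, so only the outermost level contributes the $O(n)$ and $O(\log n)$ terms in the final query time and space, with all inner contributions absorbed into $k^{O(1)}$. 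Initialization is dominated by building $\mathcal{A}(n,k)$ at the outermost wrap ($k^{O(1)} n \log n$, Proposition~\ref{prop:famA}) together with the base construction, which by Lemma~\ref{lemma:new-splitter} costs only $\tradeoff(\alpha)^{o(k)}$ time and $k^{O(1)}$ space at the chosen $K$.

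The main obstacle is the tightness of the parameter balance. The polynomial factor $K \log k$ in the base size, when raised to the total power $\ell_1 \ell_2 \approx k/\log k$, contributes $2^{\Theta((k/\log k) \log \log k)}$, which is $2^{o(k)}$ \emph{only} because $K = \Theta(\log k)$: taking $K$ much smaller would force $\ell_1 \ell_2 = \omega(k/\log k)$ and the binomial terms would blow up past $2^{o(k)}$, while taking $K$ much larger would push the base splitter out of polynomial size. This $K = \Theta(\log k)$ sweet spot is exactly where the Fomin \etal\ indexed-splitter trick makes a nested polynomial-space construction possible; verifying simultaneously that (i) the base is polynomial size, (ii) every binomial factor is $2^{o(k)}$, and (iii) the raised polynomial overhead is $2^{o(k)}$ is the central calculation.
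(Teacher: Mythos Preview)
Your high-level plan---two nested applications of Lemma~\ref{lemma:wrapping} on top of the explicit base splitter of Lemma~\ref{lemma:new-splitter}, with base subset size $K=\Theta(\log k)$---matches the paper's. But the symmetric parameter choice $\ell_1\approx\ell_2\approx\sqrt{k/\log k}$ does \emph{not} achieve the stated size bound, and the error sits precisely in the step you identify as the central calculation.

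You assert that the raised inner binomial $\binom{(k/\ell_1)^2}{\ell_2}^{\ell_1}$ is $2^{O(\sqrt{k\log k})}$. It is not. With your parameters $(k/\ell_1)^2=k\log k$, so $\log\binom{k\log k}{\sqrt{k/\log k}}=\Theta\big(\sqrt{k/\log k}\cdot\log k\big)=\Theta(\sqrt{k\log k})$ for the \emph{unraised} binomial; after raising to $\ell_1=\sqrt{k/\log k}$ the exponent becomes $\Theta\big(\sqrt{k\log k}\cdot\sqrt{k/\log k}\big)=\Theta(k)$. Hence this single factor is already $2^{\Theta(k)}$, which is fatal for the $\tradeoff(\alpha)^{k+o(k)}$ bound (recall $\tradeoff(\alpha)\to 1$ as $\alpha\to\infty$). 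In fact, for any split with $\ell_1\ell_2\approx k/\log k$ the log of this raised factor is $\Theta\big(\ell_1\ell_2\cdot\log(k/\ell_1)\big)$; to make it $o(k)$ you need $\log(k/\ell_1)=o(\log k)$, i.e.\ the \emph{outer} parameter $\ell_1$ must absorb almost the entire product and be $k^{1-o(1)}$.

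The paper takes a highly asymmetric split, with one wrapping parameter of order $k/\log^2 k$ and the other of order $\log k$. With the large parameter on the outside the inner subset size $k/\ell_1$ collapses to $\mathrm{polylog}\,k$, so the inner binomial becomes $\binom{\mathrm{polylog}\,k}{\log k}=2^{O(\log k\log\log k)}$, and even after raising to $\ell_1\approx k/\log^2 k$ it contributes only $2^{O(k\log\log k/\log k)}=2^{o(k)}$. Your remaining profile arguments (query and initialization bounds) are fine and go through once the wrapping parameters are chosen asymmetrically in this way.
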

\begin{proof}
  We apply the construction of Lemma~\ref{lemma:wrapping} \emph{twice}, first 
  with~$\ell = k / \log^2 k$ and then with~$\ell = \log k$, to the splitter~$\mathcal S$
  from Lemma~\ref{lemma:new-splitter}. The resulting family has a size of
  \begingroup % Allow page break in the following long align environment
  \allowdisplaybreaks  
  \begin{align*}
      \doublehat f(n,k,\alpha k) 
    &= k^{O(1)} {k^2 \choose \log k} \hat f\big(k^2, k/\log k, \alpha k/\log k\big)^{\log k} \log n \\
    &= k^{O(1)} e^{o(k)} \log n \\
    &\quad~~ \cdot \Big[ 
            {(k/\log k)^2 \choose k/\log^2 k } f\big((k/\log k)^2, \log k, \alpha \log k\big)^{k/\log^2 k}
          2 \log k \Big]^{\log k}  \\
    &= k^{O(1)} e^{o(k)} f\big((k/\log k)^2, \log k, \alpha \log k\big)^{k/\log k} \log n \\
    &= k^{O(1)} e^{o(k)} \Big(  \tradeoff(\alpha)^{\log k} \, 2 \log \frac{k}{\log k} \Big)^{k/\log k} \log n \\
    &= k^{O(1)}  \tradeoff(\alpha)^{k + o(k)} \log n.
  \intertext{The remaining profile of~$\doublehat{\mathcal S}$ can be bounded as follows:}
    \doublehat t_\init(n,k,\alpha k)
    &= k^{O(1)} n \log n + O( \hat t_\init(k^2, k/\log k, \alpha k/\log k) ) \\
    &= k^{O(1)} n \log n + O( t_\init((k/\log k)^2, \log k, \alpha \log k) ) \\
    &= k^{O(1)} n \log n + 
        \tradeoff(\alpha)^{\log k} {(k/\log k)^2 \choose \log k} \big(2 (k\log k)^2\big)^{\alpha \log k} \log^{O(1)} \! k \\
    &= k^{O(1)} \tradeoff(\alpha)^{o(k)} n \log n. \\
    \doublehat s_\init(n,k,\alpha k)
    &= O( \hat s_\init(k^2, k/\log^2 k, \alpha k/\log^2 k) ) + k^{O(1)} n \log n  \\
    &= O( s_\init((k/\log k)^2, \log k, \alpha \log k) ) + k^{O(1)} n \log n  \\
    &= O( \tradeoff(\alpha)^{\log k} \log^2 (k/\log k) )  + k^{O(1)} n \log n = k^{O(1)} n \log n. \\
    \doublehat t_\query(n,k,\alpha k)
    &=  O(k \, \hat t_\query(k^2, k/\log k, \alpha /\log k) + n) \\
    &=  O\big(k^2 \, t_\query((k/\log k)^2, \log k, \alpha \log k) + n\big) \\
    &=  k^{O(1)} + O(n)  \\
    \doublehat s_\query(n,k,\alpha k)
    &= O(k \, \hat s_\query(k^2, k/\log k, \alpha k/\log k) + \log n) \\
    &= O\big(k^2 \, s_\query((k/\log k)^2, k/\log k, \alpha \log k) + k\log k + \log n\big) \\
    &= k^{O(1)} + O(\log n). \qedhere
  \end{align*}
  \endgroup
\end{proof}

\noindent 
Since the an indexed splitter with polynomial query time~$t_\query$
in particular means that we can enumerate its members with polynomial delay,
Theorem~\ref{thm:polyDelayHash} follows directly from Theorem~\ref{thm:polyspace-splitter}.

\section{Proof of Proposition \ref{prop:enrichCalc}}\label{sec:prop}

\noindent
As H\"{u}ffner \etal noted~\cite{ColorEngineering}, the running time of 
color coding algorithms can usually be improved by using more than~$k$ colors, balancing
the success probability of the coloring with the running time of the algorithm that uses
the coloring. In particular, they showed that a typical running time 
of~$\frac{t^{k}}{{t \choose k}\cdot k!} \cdot 2^t$ (were the first
term is the reciprocal of the success probability and the second term corresponds to an algorithm that needs
time~$2^t$ for an instance colored with $t$ colors) can be bounded by
\[
  \frac{(\alpha k)^{k}}{{\alpha k \choose k}\cdot k!} \cdot 2^{\alpha k} = \OO^*(4.314^k)
\]
for~$\alpha = 1.3$, a value that was derived numerically. Using the function~$\tradeoff(\alpha)$,
we can shed some light on how this value comes about. Consider a running time
\[ 
  \frac{(\alpha k)^{k}}{{\alpha k \choose k}\cdot k!} \cdot 2^{\alpha k} 
  \leq \tradeoff(\alpha)^k 2^{\alpha k} = (h(\alpha)\cdot e)^k, 
\]
 where $h(\alpha)=\big(1-1/{\alpha}\big)^{\alpha-1} 2^\alpha$. Let us minimize $h(\alpha)$ over $\alpha>1$. Since
\[
  \frac{{\rm d} h(\alpha)}{{\rm d} \alpha} 
  = (\alpha-1)^{\alpha-1} \Big( \frac{2}{\alpha} \Big)^{\alpha} \Big(\alpha \ln\Big( \frac{2(\alpha-1)}{\alpha}\Big) + 1\Big)
\]
and the first two terms of this expression have no roots for~$\alpha > 1$, we 
are left with solving
\begin{equation}\label{eq4}
  \alpha \ln\Big(\frac{2(\alpha-1)}{\alpha}\Big) = -1
  \iff 
  \Big(1 - 1/{\alpha} \Big)^\alpha 2^\alpha = \frac{1}{e} 
\end{equation}
for~$\alpha$. The only solution is $\alphaopt = 1.302017\dots$ and note that
 $\frac{{\rm d} h(\alpha)}{{\rm d} \alpha}$ is negative for $\alpha=1.2$ and
 positive for $\alpha=1.4$. Thus, $h(\alpha)\ (\alpha>1)$  is minimized for
 $\alpha=\alphaopt$. By (\ref{eq4}),
 $$h(\alphaopt)\Big(1-\frac{1}{\alphaopt}\Big)=\frac{1}{e}$$ and hence the
 optimal running time is \MZ{bounded from above} by%For this value of~$\alpha$, the running time of the algorithm simply becomes
\[
  (h(\alphaopt)e)^k
  = \Big(\frac{\alphaopt}{\alphaopt-1} \Big)^k\le \Big(\frac{1.302018}{0.302017} \Big)^k = \OO^*(4.312^k).
\]
This proves Proposition \ref{prop:enrichCalc} for $p=0$ and thus for arbitrary $p$.
%For the above example with~$c = 2$, we calculate a value of~
%and hence the running time turns out to be
%\[
%  \Big(\big(1-1/{\alphaopt}\big)^{\alphaopt-1} 2^{\alphaopt} e \Big)^k 
%  \le \Big(\frac{1.302018}{0.302017} \Big)^k = \OO^*(4.312^k).
%\]

\section{Discussion}\label{sec:conclusion}

\noindent
In this paper, we presented \GG{an enhancement of color coding} to design polynomial-space
parameterized algorithms. We provided three applications centered around 
out-branchings, spanning trees, and perfect matchings. In particular, we obtained a
deterministic polynomial-space algorithm for {\sc $k$-IOB} that runs in time
\polyIOB. Along the way, we showed how to enumerate $(n,k,\alpha k)$-splitters
one by one with polynomial delay. In addition, we demonstrated that our
approach can be adapted to enable the development of faster parameterized
algorithms in case the use of exponential space is permitted.

Let us conclude our paper with an interesting open question. Let $\alpha_1(D)$
be the maximum size of a matching in a digraph $D$. By
Lemma~\ref{lemma:matching} (ii), $D$ has an $\alpha_1(D)$-internal out-branching. The following problem is natural: what is the parameterized
complexity of deciding whether $D$ has an  $(\alpha_1(D)+k)$-internal out-branching, where $k$ is the parameter?

\bigskip
{\noindent\bf Acknowledgment.} We thank Saket Saurabh for helpful information concerning splitters.

%\appendix
%\newpage

\end{document}